\begin{document}

\title{The Complete Weight Enumerator of Several Cyclic Codes%\thanks{Grants or other notes
%about the article that should go on the front page should be
%placed here. General acknowledgments should be placed at the end of the article.}
}
%\subtitle{Do you have a subtitle?\\ If so, write it here}

\titlerunning{The CWE of Several Cyclic Codes}        % if too long for running head

\author{Shudi Yang      \and
        Zheng-An Yao
        }

%\authorrunning{Short form of author list} % if too long for running head

\institute{S.D. Yang \at
              Department of Mathematics,
Sun Yat-sen University, Guangzhou 510275 and School of Mathematical
Sciences, Qufu Normal University, Shandong 273165, P.R.China \\
              Tel.: +86-15602338023\\
                            \email{yangshd3@mail2.sysu.edu.cn}           %  \\
%             \emph{Present address:} of F. Author  %  if needed
           \and
           Z.-A. Yao \at
               Department of Mathematics,
Sun Yat-sen University, Guangzhou 510275, P.R. China}

\date{Received: date / Accepted: date}
% The correct dates will be entered by the editor

\maketitle

\begin{abstract}
 Cyclic codes have attracted a lot of research interest for decades.
 In this paper, for an odd prime $p$, we propose a general strategy to
 compute the complete weight enumerator of cyclic codes via the value
 distribution of the corresponding exponential sums. As applications of this
 general strategy, we determine the complete weight enumerator of several
 $p$-ary cyclic codes and give some examples to illustrate our results.

\keywords{Cyclic code \and Gauss sum \and Exponential sum
\and\\
Weight enumerator \and Complete weight enumerator}
% \PACS{PACS code1 \and PACS code2 \and more}
% \subclass{MSC code1 \and MSC code2 \and more}
\end{abstract}

\vspace{1\baselineskip}
\noindent$\displaystyle \mathbf{Mathematics~~ Subject~~ Classification~~~} $    11T71$ \cdot$94B15

\section{Introduction}\label{sec:intro}

 Throughout this paper, let $p$ be an odd prime. Denote by $\mathbb{F}_p$ a
finite field with $p$ elements. An $[n, \kappa, \delta\,]$ linear code
$C$ over $\mathbb{F}_p$ is a $\kappa$-dimensional subspace of
$\mathbb{F}_p^n$ with minimum distance $\delta$. Moreover, the code is
cyclic if every codeword $(c_0,c_1,\cdots,c_{n-1})\in C $ whenever
$(c_{n-1},c_0,\cdots,c_{n-2})\in C $. Any cyclic code $C$ of length
$n$ over $\mathbb{F}_p$ can be viewed as an ideal of
$\mathbb{F}_p[x]/(x^n-1)$. Therefore, $C=\left\langle
g(x)\right\rangle $, where $g(x)$ is the monic polynomial of lowest degree
and divides $x^n-1$. Then $g(x)$ is called the
generator polynomial and $h(x)=(x^n-1)/g(x)$ is called the parity-check
polynomial \cite{macwilliams1977theory}.

 The ordinary weight enumerator of $C$ of length $n$ is defined by
$$A_0+A_1x+A_2x^2+\cdots+A_nx^n,$$
where $A_i$ is the number of codewords with Hamming weight
$i$ and $A_0=1$. The sequence $(A_0,A_1,A_2,\cdots,A_n)$ is called the weight
distribution of the code $C$.

The complete weight enumerator of a code $C$ over $\mathbb{F}_p$ enumerates the codewords according
to the number times each element of the field appears in each codeword. Denote the
field elements by $\mathbb{F}_p=\{w_0,w_1,\cdots,w_{p-1}\}$, where $w_0=0$.
Also let $\mathbb{F}_p^*$ denote $\mathbb{F}_p\backslash\{0\}$.
For a codeword $\mathsf{c}=(c_0,c_1,\cdots,c_{n-1})\in \mathbb{F}_p^n$, let $w[\mathsf{c}]$ be the
complete weight enumerator of $\mathsf{c}$ defined as
$$w[\mathsf{c}]=w_0^{k_0}w_1^{k_1}\cdots w_{p-1}^{k_{p-1}},$$
where $k_j$ is the number of components of $\mathsf{c}$ equal to $w_j$, $\sum_{j=0}^{p-1}k_j=n$.
The complete weight enumerator of the code $C$ is then
$$\mathrm{CWE}(C)=\sum_{\mathsf{c}\in C}w[\mathsf{c}].$$

The weight
distribution of a code has been extensively studied for a long time and we refer the reader to \cite{ding2011,ding2013hamming,dinh2015recent,sharma2012weight,vega2012weight,zheng2013weight} and references therein for an overview of the related researches.
Note that the complete weight enumerator of a codeword implies its weight, which indicates
that the weight distribution of the code can be obtained from its complete weight enumerator. The information of the complete weight enumerator of a linear code is of vital use in practical applications. For example, Blake and Kith pointed out that the  complete weight enumerator of Reed-Solomon codes could be helpful  in soft decision decoding~\cite{Blake1991,kith1989complete}. In~\cite{helleseth2006}, the study of the monomial and quadratic bent functions was related to the complete weight enumerators of linear codes. Ding $et~al.$~\cite{ding2007generic,Ding2005auth} showed that the complete weight enumerator can be applied in the computation of the deception probabilities of certain authentication codes. In~\cite{chu2006constant,ding2008optimal,ding2006construction}, the complete weight enumerators of some constant
composition codes were shown to have only one term and some families of optimal constant composition codes were presented.

However, only a few works were focused on the determination of the complete
weight enumerator of linear codes in the literature besides the above mentioned~\cite{Blake1991,kith1989complete,chu2006constant,ding2008optimal,ding2006construction}.
The complete weight enumerators of the
generalized Kerdock code and related linear codes over Galois rings were determined by Kuzmin and Nechaev in~\cite{kuzmin1999complete,kuzmin2001complete}. The authors obtained the complete weight enumerators of some cyclic codes by using exponential sums in~\cite{BaeLi2015complete,li2015complete}. In this paper, we shall determine the complete weight enumerators of a class of cyclic codes related to some special quadratic forms.

Let $m$ and $l$ be two positive integers with $m>l$. For now on, we denote by $\alpha$
a primitive
element of $\mathbb{F}_{p^m}$. Let
$h_1(x)$ and $h_2(x)$ be the minimal polynomials of $\alpha^{-(p^l+1)}$
and $\alpha^{-2}$ over $\mathbb{F}_p$, respectively. Obviously,
$h_1(x)$ and $h_2(x)$ are distinct and
$\mathrm{deg}(h_2(x))=m$. Moreover, it can be easily shown that
$\mathrm{deg}(h_1(x))=m/2$ if $m=2l$ and $m$ otherwise.

Let $C_1$ and $C_2$ be two cyclic codes over $\mathbb{F}_p$ of length
$p^m-1$ with parity-check polynomials $h_1(x)$and
$h_1(x)h_2(x)$, respectively. Hence, for the dimensions of $C_1$ and $C_2$, we have
\begin{eqnarray*}\mathrm{dim}_{\mathbb{F}_{p}}C_1=\left\{\begin{array}{lll}\frac{1}{2}m,&&~~\mathrm{if}~~m=2l,\\
m,&&~~ \mathrm{otherwise},\end{array}
\right.
\end{eqnarray*}and
\begin{eqnarray*}\mathrm{dim}_{\mathbb{F}_{p}}C_2=\left\{\begin{array}{lll}\frac{3}{2}m,&&~~\mathrm{if}~~m=2l,\\
2m,&&~~ \mathrm{otherwise}.\end{array}
\right.
\end{eqnarray*}

From the well-known Delsarte¡¯s Theorem \cite{delsarte1975subfield}, we have the trace representation of $C_1$ and $C_2$ described by
\begin{eqnarray*}
   % C_1&=&\{(\mathrm{Tr}^m_1(ax^2))_{x\in\mathbb{F}_{p^m}^*}:a\in\mathbb{F}_{p^m}\},\\
    C_1&=&\{(\mathrm{Tr}^m_1(ax^{p^l+1}))_{x\in\mathbb{F}_{p^m}^*}:a\in\mathbb{F}_{p^m}\},\\
    C_2&=&\{(\mathrm{Tr}^m_1(ax^{p^l+1}+bx^2))_{x\in\mathbb{F}_{p^m}^*}:a,b\in\mathbb{F}_{p^m}\}.
\end{eqnarray*}

The weight distribution of $C_1$ is trivial and can be easily
obtained since the value distributions of the corresponding exponential sums are already known(see \cite{coulter1998explicit,draper2007explicit}). However, to the best of our knowledge, there are no information about its complete weight enumerator. The cyclic code $C_2$ was investigated in the literature. Luo and Feng~\cite{luo2008weight} studied its weight distribution explicitly. Bae, Li and Yue~\cite{BaeLi2015complete} established its complete weight enumerator in the special case of $\mathrm{gcd}(m,l)=1$. In this paper, we will explicitly present the complete weight enumerators of $C_1$ and $C_2$ in view of the relationship between $\upsilon_2(m)$ and $\upsilon_2(l)$ for arbitrary $m$ and $l$ with $m>l$, where $\upsilon_2(\cdot)$ is the 2-adic order function. Thus, we will extend the results in~\cite{BaeLi2015complete} to some extent.

The aim of this paper is to investigate the complete weight enumerators for cyclic codes by utilizing the
theories of Gauss sums and exponential sums over finite fields. A general strategy is proposed
and then applied to determine the complete weight
enumerators for the codes $C_1$ and $C_2$, respectively.

The remainder of this paper is organized as follows. In Section
\ref{sec:Preli}, we introduce some definitions and auxiliary results on
quadratic forms, Gauss sums and exponential sums. Section \ref{sec:main} gives the
main results of this paper, including a general strategy for cyclic codes and the explicitly complete weight
enumerators for the codes $C_1$ and $C_2$.
Section \ref{sec:conclusion} concludes this paper and makes some
remarks on this topic.

\section{Preliminaries}\label{sec:Preli}

We follow the notations in Section \ref{sec:intro}. Let $q$ be a
power of $p$ and $t$ be a positive integer. By identifying
the finite field $\mathbb{F}_{q^t}$ with a $t$-dimensional vector
space $\mathbb{F}^t_{q}$ over $\mathbb{F}_{q}$, a function $f(x)$
from $\mathbb{F}_{q^t}$ to $\mathbb{F}_{q}$ can be regarded as a
$t$-variable polynomial over $\mathbb{F}_{q}$. The function $f(x)$ is called
a quadratic form if it can be written as a
homogeneous polynomial of degree two on  $\mathbb{F}^t_{q}$ as
follows:
$$f(x_1,x_2,\cdots,x_t)=\sum_{1\leqslant i \leqslant j\leqslant t}a_{ij}x_ix_j,~~a_{ij}\in \mathbb{F}_{q}.$$
Here we fix a basis of  $\mathbb{F}^t_{q}$ over  $\mathbb{F}_{q}$
and identify each $x\in \mathbb{F}_{q^t}$ with a vector
$(x_1,x_2,\cdots,x_t)\in\mathbb{F}^t_{q}$.
 The rank of the quadratic form $f(x)$, rank$(f)$, is defined as the
codimension of the $\mathbb{F}_{q}$-vector space
$$W=\{x\in \mathbb{F}_{q^t}|f(x+z)-f(x)-f(z)=0, ~~\mathrm{for~~all}~~z\in \mathbb{F}_{q^t}\}.$$
Then $|W|=q^{t-\mathrm{rank}(f)}$.

For a quadratic form $f(x)$ with $t$ variables over $\mathbb{F}_q$,
there exists a symmetric matrix $A$ over $\mathbb{F}_q$
such that $f(x)=XAX'$, where $X=(x_1,x_2,\cdots,x_t)\in
\mathbb{F}^t_q$ and $X'$ denotes the transpose of $X$. It is known
that there exists a nonsingular matrix $B$ over $\mathbb{F}_q$ such
that $BAB'$ is a diagonal matrix. Making a nonsingular linear
substitution $X=YB$ with $Y=(y_1,y_2,\cdots,y_t)\in \mathbb{F}^t_q$,
we have
$$f(x)=Y(BAB')Y'=\sum^r_{i=1}a_iy^2_i,\,\,\,a_i\in \mathbb{F}^*_q,$$
where $r$ is the rank of $f(x)$. The determinant $\mathrm{det}(f)$
of $f(x)$ is defined to be the determinant of $A$, and $f(x)$ is said to be
nondegenerate if $\mathrm{det}(f)\neq0$.

The quadratic character over $\mathbb{F}_{p^m}$ is defined by
\begin{eqnarray*}
\eta(x)=\left\{\begin{array}{lll}1,
&&\mathrm{if ~~}x \mathrm{~~is~~ a~~ square~~ in~~} \mathbb{F}_{p^m}^*,\\
-1,&&\mathrm{if ~~}x \mathrm{~~is~~ a~~ nonsquare~~ in~~} \mathbb{F}_{p^m}^*,\\
0,&&\mathrm{if ~~}x=0.\\
\end{array}
\right.
\end{eqnarray*}

The canonical additive character of $\mathbb{F}_{p^m}$, denoted $\chi$, is given by
\begin{eqnarray*}
\chi(x)=\zeta_p^{\mathrm{Tr}^m_1(x)}
\end{eqnarray*}for all $x\in \mathbb{F}_{p^m}$, where $\zeta_p=e^{2\pi\sqrt{-1}/p}$ and $\mathrm{Tr}^m_1$ is a trace function from
$\mathbb{F}_{p^m}$ to $\mathbb{F}_{p}$ defined by
$$\mathrm{Tr}^m_1(x)=\sum^{m-1}_{i=0}x^{p^i},~~x\in
\mathbb{F}_{p^m}.$$

To this end, we shall introduce the Gauss sum $G(\eta,\chi)$ over $\mathbb{F}_{p^m}$ given by
\begin{eqnarray*}
G(\eta,\chi)=\sum_{x\in\mathbb{F}_{p^m}^*}\eta(x)\chi(x)=\sum_{x\in\mathbb{F}_{p^m}}\eta(x)\chi(x),
\end{eqnarray*}
and the Gauss sum $G(\bar{\eta},\bar{\chi})$ over $\mathbb{F}_{p}$ given by
\begin{eqnarray*}
G(\bar{\eta},\bar{\chi})=\sum_{x\in\mathbb{F}_{p}^*}\bar{\eta}(x)\bar{\chi}(x)=\sum_{x\in\mathbb{F}_{p}}\bar{\eta}(x)\bar{\chi}(x),
\end{eqnarray*}
where $\bar{\eta}$ and $\bar{\chi}$ are the quadratic and canonical additive characters of $\mathbb{F}_{p}$, respectively.

The lemmas presented below will turn out to be of use in the
sequel.

\begin{lemma}(See Theorems 5.15 \cite{lidl1983finite})\label{lm:gauss sum}
With the symbols and notation above, we have
\begin{eqnarray*}\label{eq:Gausspm}
G(\eta,\chi)=(-1)^{m-1}(\sqrt{-1})^{\frac{(p-1)^2}{4}m}p^{\frac{m}{2}},
\end{eqnarray*}
and
\begin{eqnarray*}\label{eq:Gaussp}
G(\bar{\eta},\bar{\chi})=(\sqrt{-1})^{\frac{(p-1)^2}{4}}p^{\frac{1}{2}}.
\end{eqnarray*}

\end{lemma}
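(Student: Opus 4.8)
The plan is to handle the two Gauss sums in sequence: first evaluate the prime-field sum $G(\bar{\eta},\bar{\chi})$ over $\mathbb{F}_p$ outright, and then promote that evaluation to $\mathbb{F}_{p^m}$ by the Davenport--Hasse lifting relation. The key structural observation enabling the second step is that $\eta$ and $\chi$ are precisely the lifts of $\bar{\eta}$ and $\bar{\chi}$: writing $N$ for the norm from $\mathbb{F}_{p^m}$ to $\mathbb{F}_p$, one checks $\bar{\eta}(N(x))=N(x)^{(p-1)/2}=x^{(p^m-1)/2}=\eta(x)$, so $\eta=\bar{\eta}\circ N$, while by definition $\chi=\bar{\chi}\circ\mathrm{Tr}^m_1$. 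Thus the whole problem reduces to the base field, plus bookkeeping.

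For the base field I would first pin down the modulus and the square. Since $\bar{\eta}$ is a nontrivial character, the standard orthogonality estimate gives $|G(\bar{\eta},\bar{\chi})|^2=p$. Because $\bar{\eta}$ is real-valued and $\overline{\bar{\chi}(x)}=\bar{\chi}(-x)$, the substitution $x\mapsto -x$ yields $\overline{G(\bar{\eta},\bar{\chi})}=\bar{\eta}(-1)\,G(\bar{\eta},\bar{\chi})$. Combining these two facts gives
\begin{eqnarray*}
G(\bar{\eta},\bar{\chi})^2=\bar{\eta}(-1)\,p=(-1)^{\frac{p-1}{2}}p,
\end{eqnarray*}
so that $G(\bar{\eta},\bar{\chi})$ equals $\pm p^{1/2}$ when $p\equiv 1\pmod 4$ and $\pm\sqrt{-1}\,p^{1/2}$ when $p\equiv 3\pmod 4$. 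A short check of $(p-1)^2/4$ modulo $4$ in each residue class shows these two cases are exactly the shape $(\sqrt{-1})^{(p-1)^2/4}p^{1/2}$ up to an undetermined overall sign.

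The genuine obstacle, and the historical heart of the statement, is resolving that sign, since the squaring argument is blind to it. I would settle it by Schur's eigenvalue method. Consider the $p\times p$ matrix $F=(\zeta_p^{jk})_{0\le j,k<p}$; a direct expansion using $\sum_j\zeta_p^{j^2}=\sum_{x}\bigl(1+\bar{\eta}(x)\bigr)\bar{\chi}(x)=G(\bar{\eta},\bar{\chi})$ identifies the quadratic Gauss sum with $\mathrm{tr}(F)$. One computes $F^2=p\,P$ for the reflection permutation $P\colon j\mapsto -j$, hence $F^4=p^2 I$, so the eigenvalues of $F$ all lie in $\{\,p^{1/2},\,-p^{1/2},\,\sqrt{-1}\,p^{1/2},\,-\sqrt{-1}\,p^{1/2}\,\}$. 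The crux is then to count the multiplicity of each eigenvalue, which can be extracted from $\mathrm{tr}(F)$, $\mathrm{tr}(F^2)=\pm p$, and $\mathrm{tr}(P)$; feeding these into $\mathrm{tr}(F)=\sum(\text{eigenvalue})\times(\text{multiplicity})$ forces the sign to be $+$ in both residue classes, giving $G(\bar{\eta},\bar{\chi})=(\sqrt{-1})^{(p-1)^2/4}p^{1/2}$. This multiplicity computation is the one step I expect to be delicate and would carry out with care.

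Finally I would apply the Davenport--Hasse relation $G(\eta,\chi)=(-1)^{m-1}G(\bar{\eta},\bar{\chi})^m$, valid precisely because $\eta=\bar{\eta}\circ N$ and $\chi=\bar{\chi}\circ\mathrm{Tr}^m_1$ as noted above. Substituting the base-field value gives
\begin{eqnarray*}
G(\eta,\chi)=(-1)^{m-1}\bigl((\sqrt{-1})^{(p-1)^2/4}p^{1/2}\bigr)^m=(-1)^{m-1}(\sqrt{-1})^{\frac{(p-1)^2}{4}m}p^{\frac{m}{2}},
\end{eqnarray*}
which is exactly the asserted formula. The remaining work is purely the exponent bookkeeping in the last substitution, which is routine once the sign in the prime-field case is established.
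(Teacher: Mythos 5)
The paper offers no proof of this lemma at all: it is quoted directly from Theorem 5.15 of Lidl--Niederreiter, so the only meaningful comparison is with the classical argument behind that citation. Your overall architecture is exactly that classical route: evaluate $G(\bar{\eta},\bar{\chi})$ over the prime field, then lift by the Davenport--Hasse relation. Your reductions are correct: $\bar{\eta}(N(x))=x^{(p^m-1)/2}=\eta(x)$ and $\chi=\bar{\chi}\circ\mathrm{Tr}^m_1$, so $G(\eta,\chi)=(-1)^{m-1}G(\bar{\eta},\bar{\chi})^m$ applies; the computations $|G(\bar{\eta},\bar{\chi})|^2=p$ and $G(\bar{\eta},\bar{\chi})^2=(-1)^{(p-1)/2}p$, the identification of $G(\bar{\eta},\bar{\chi})$ with $\mathrm{tr}(F)$ for Schur's matrix $F=(\zeta_p^{jk})$, and the final exponent bookkeeping all check out.

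The genuine gap is in the one step you yourself flagged as delicate, and as described it does not merely need care --- it fails. Writing $(a,b,c,d)$ for the multiplicities of the eigenvalues $(\sqrt{p},-\sqrt{p},\sqrt{-1}\sqrt{p},-\sqrt{-1}\sqrt{p})$ of $F$, the data you propose to feed in give only: $a+b+c+d=p$; $\mathrm{tr}(F^2)=p\,\mathrm{tr}(P)=p$ (not $\pm p$), hence $a+b=\frac{p+1}{2}$ and $c+d=\frac{p-1}{2}$; and $|\mathrm{tr}(F)|^2=p$, hence $(a-b)^2+(c-d)^2=1$. These constraints leave exactly the four candidates $\pm\sqrt{p},\ \pm\sqrt{-1}\sqrt{p}$ for $\mathrm{tr}(F)$ --- precisely the ambiguity you started with after the squaring argument. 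Higher traces cannot rescue this: $\mathrm{tr}(F^3)=p\,\overline{\mathrm{tr}(F)}$ is circular and $\mathrm{tr}(F^4)=p^3$ is vacuous. Schur's proof requires one further input that your plan omits entirely: the evaluation of $\det F$ as the Vandermonde product $\prod_{0\leqslant j<k\leqslant p-1}\bigl(\zeta_p^{k}-\zeta_p^{j}\bigr)$, each factor of which equals $\zeta_p^{(j+k)/2}\cdot 2\sqrt{-1}\,\sin\bigl(\pi(k-j)/p\bigr)$ with positive sine, so that the argument of $\det F$ can be computed outright; comparing with $\det F=(\sqrt{p})^{a}(-\sqrt{p})^{b}(\sqrt{-1}\sqrt{p})^{c}(-\sqrt{-1}\sqrt{p})^{d}$ is what pins down $(a,b,c,d)$ and hence the sign. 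Since the sign \emph{is} the entire content of the prime-field statement (the modulus and the square are elementary), the proposal as written does not prove the lemma; you must add the determinant computation or replace it by another classical sign argument (Gauss, Dirichlet, or Mordell).
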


\begin{lemma}(See Theorem 5.33 of \cite{lidl1983finite})\label{lm:expo sum}
With the symbols and notation above. Let $f(x)=a_2x^2+a_1x+a_0\in \mathbb{F}_{p^m}[x]$ with
$a_2\neq0$. Then
\begin{eqnarray*}\label{eq:expo sum}
\sum_{x\in
\mathbb{F}_{p^m}}\chi(f(x))=\chi(a_0-a_1^2(4a_2)^{-1})\eta(a_2)G(\eta,\chi).
\end{eqnarray*}
\end{lemma}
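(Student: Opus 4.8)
The plan is to reduce the general quadratic sum to a pure square sum by completing the square, and then to evaluate that pure square sum through a character-counting identity. First I would note that since $p$ is odd and $a_2\neq 0$, the element $2a_2$ is invertible in $\mathbb{F}_{p^m}$, so completing the square is legitimate and gives
$$f(x) = a_2\left(x + a_1(2a_2)^{-1}\right)^2 + \left(a_0 - a_1^2(4a_2)^{-1}\right).$$
Because the canonical additive character satisfies $\chi(u+v)=\chi(u)\chi(v)$, the constant term factors out of the summation, so that
$$\sum_{x\in\mathbb{F}_{p^m}}\chi(f(x)) = \chi\left(a_0 - a_1^2(4a_2)^{-1}\right)\sum_{x\in\mathbb{F}_{p^m}}\chi\left(a_2\left(x + a_1(2a_2)^{-1}\right)^2\right).$$
Since $x\mapsto x + a_1(2a_2)^{-1}$ is a bijection of $\mathbb{F}_{p^m}$, the remaining sum is exactly $\sum_{y\in\mathbb{F}_{p^m}}\chi(a_2 y^2)$, and the whole problem collapses to evaluating this pure square sum.

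The crux of the argument is then the identity $\sum_{y\in\mathbb{F}_{p^m}}\chi(a_2 y^2)=\eta(a_2)G(\eta,\chi)$. To establish it I would count preimages under squaring: for every $z\in\mathbb{F}_{p^m}$ the number of solutions $y$ of $y^2=z$ equals $1+\eta(z)$, which yields $2$ for a nonzero square, $0$ for a nonsquare, and $1$ for $z=0$ on account of the convention $\eta(0)=0$. Regrouping the sum according to the value $z=y^2$ therefore gives
$$\sum_{y\in\mathbb{F}_{p^m}}\chi(a_2 y^2) = \sum_{z\in\mathbb{F}_{p^m}}(1+\eta(z))\chi(a_2 z) = \sum_{z\in\mathbb{F}_{p^m}}\chi(a_2 z) + \sum_{z\in\mathbb{F}_{p^m}}\eta(z)\chi(a_2 z).$$
The first sum vanishes by orthogonality of additive characters, as $a_2\neq 0$. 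For the second, I would substitute $w=a_2 z$ and use that $\eta$ is multiplicative with $\eta(a_2^{-1})=\eta(a_2)$ (since $\eta(a_2)=\pm1$), turning it into $\eta(a_2)\sum_{w\in\mathbb{F}_{p^m}}\eta(w)\chi(w)=\eta(a_2)G(\eta,\chi)$ by the very definition of the Gauss sum. Feeding this back into the completed-square reduction produces the asserted formula.

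The only genuinely delicate points are the preimage count $\#\{y:y^2=z\}=1+\eta(z)$ and the orthogonality relation $\sum_{z}\chi(a_2 z)=0$; everything else is bookkeeping. I expect the one place demanding care to be the uniform treatment of $z=0$, but the convention $\eta(0)=0$ makes $1+\eta(z)$ the correct count for all $z$ simultaneously, so no separate case arises. It is worth emphasizing that this evaluation never invokes the explicit value of $G(\eta,\chi)$ from Lemma~\ref{lm:gauss sum}; that closed form is only needed later, when the lemma is applied to the concrete codes $C_1$ and $C_2$.
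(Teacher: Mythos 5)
Your proof is correct. The paper gives no proof of this lemma at all---it is quoted verbatim from Theorem 5.33 of Lidl and Niederreiter---and your argument (complete the square, shift the summation variable, then evaluate $\sum_{y}\chi(a_2y^2)$ via the preimage count $1+\eta(z)$, orthogonality of additive characters, and multiplicativity of $\eta$) is precisely the standard proof found in that cited source, so there is nothing to fault and no genuine divergence to report.
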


Let $d=\mathrm{gcd}(m,l)$ denote the greatest common
divisor of $m$ and $l$. Take $s={m}/{d}$. In the sequel we will require the following lemma whose
proof can be found in \cite{coulter1998explicit,draper2007explicit,yu2014weight}.

\begin{lemma}\label{lm: exponentialsums}
Let $S(a)=\sum_{x\in
\mathbb{F}_{p^m}}\zeta_p^{\mathrm{Tr}(ax^{p^l+1})}$ and $d=\mathrm{gcd}(m,l)$.
Let $\upsilon_2(\cdot)$ denote
the 2-adic order function.
Then $Q(x)=\mathrm{Tr}(ax^{p^l+1})$ is a quadratic form and for any
$a\in \mathbb{F}^*_{p^m}$,\\
 \textcircled{1} If
$\upsilon_2(m)\leqslant \upsilon_2(k)$, then
 $\mathrm{rank}(Q(x))=m$ and
\begin{eqnarray}\label{eq:S(a)12}
S(a)=\left\{\begin{array}{lll}~~\sqrt{(-1)^{\frac{p^d-1}{2}}}~p^{\frac{m}{2}},
&&\frac{p^m-1}{2}~~times,\\
-\sqrt{(-1)^{\frac{p^d-1}{2}}}~p^{\frac{m}{2}},
&&\frac{p^m-1}{2}~~times.\\
\end{array}
\right.
\end{eqnarray}
\textcircled{2} If $\upsilon_2(m)=\upsilon_2(k)+1$, then
$\mathrm{rank}(Q(x))=m$ or $m-2d$ and
\begin{eqnarray}\label{eq:S(a)3}
S(a)=\left\{\begin{array}{lll}-p^{\frac{m}{2}},
&&~~~\frac{p^d(p^m-1)}{p^d+1}~~times,\\
~~p^{\frac{m}{2}+d}, &&~~~\frac{p^m-1}{p^d+1}~~~~~~times.\\
\end{array}
\right.
\end{eqnarray}
\textcircled{3} If $\upsilon_2(m)>\upsilon_2(k)+1$, then
$\mathrm{rank}(Q(x))=m$ or $m-2d$ and
\begin{eqnarray}\label{eq:S(a)4}
S(a)=\left\{\begin{array}{lll}~~p^{\frac{m}{2}},
&&~~\frac{p^d(p^m-1)}{p^d+1}~~times,\\
-p^{\frac{m}{2}+d}, &&~~\frac{p^m-1}{p^d+1}~~~~~~times.\\
\end{array}
\right.
\end{eqnarray}
\end{lemma}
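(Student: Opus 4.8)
The plan is to regard $Q(x)=\mathrm{Tr}(ax^{p^l+1})$ as a quadratic form on $\mathbb{F}_{p^m}\cong\mathbb{F}_p^m$, reduce the evaluation of $S(a)$ to the rank of $Q$, and then settle the rank, the multiplicities, and the attached sign as three separate tasks. First I would confirm that $Q$ is a quadratic form and compute its associated symmetric bilinear form. Since $x\mapsto x^{p^l}$ is $\mathbb{F}_p$-linear, expanding $(x+z)^{p^l+1}=(x^{p^l}+z^{p^l})(x+z)$ gives
$$B(x,z):=Q(x+z)-Q(x)-Q(z)=\mathrm{Tr}\big(a(x^{p^l}z+xz^{p^l})\big).$$
Applying $\mathrm{Tr}(y)=\mathrm{Tr}(y^{p^{m-l}})$ to the second summand, I would rewrite this as $B(x,z)=\mathrm{Tr}(L(x)z)$ with $L(x)=ax^{p^l}+a^{p^{m-l}}x^{p^{m-l}}$, so that the radical $W$ is exactly the kernel of the $\mathbb{F}_p$-linearized polynomial $L$. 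Composing with the bijection $(\,\cdot\,)^{p^l}$, this kernel coincides with the solution set of $a^{p^l}x^{p^{2l}}+ax=0$, whose nonzero solutions satisfy $x^{p^{2l}-1}=-a^{1-p^l}$, and $\mathrm{rank}(Q)=m-\dim_{\mathbb{F}_p}W$.

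The key structural step is the rank count. Because $W$ is an $\mathbb{F}_p$-subspace, the solution set has $p^{\gcd(2l,m)}$ elements when the displayed equation is solvable and equals $\{0\}$ otherwise, solvability being governed by the norm condition $(-a^{1-p^l})^{(p^m-1)/(p^{\gcd(2l,m)}-1)}=1$. A short $2$-adic computation shows $\gcd(2l,m)=d$ when $\upsilon_2(m)\leqslant\upsilon_2(l)$ and $\gcd(2l,m)=2d$ otherwise; moreover the sign $(-1)^{(p^m-1)/(p^{\gcd(2l,m)}-1)}$ equals $(-1)^{m/\gcd(2l,m)}$, which is $-1$ precisely in case \textcircled{1}. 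Hence in case \textcircled{1} the norm condition can never hold, forcing $\dim W=0$ and $\mathrm{rank}(Q)=m$ for every $a\neq0$; in cases \textcircled{2} and \textcircled{3} the condition genuinely depends on $a$, giving $\mathrm{rank}(Q)\in\{m,\,m-2d\}$.

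Next I would convert rank into magnitude. Diagonalizing $Q$ over $\mathbb{F}_p$ and summing coordinatewise, a form of rank $r$ satisfies $S(a)=p^{m-r}\,\bar\eta(\Delta)\,G(\bar\eta,\bar\chi)^{\,r}$, where $\bar\eta$ and $\Delta$ are the $\mathbb{F}_p$-quadratic character and the determinant of the nondegenerate part; by Lemma~\ref{lm:gauss sum} this has absolute value $p^{m-r/2}$, namely $p^{m/2}$ for $r=m$ and $p^{m/2+d}$ for $r=m-2d$, which are exactly the two magnitudes appearing in \eqref{eq:S(a)12}--\eqref{eq:S(a)4}. The multiplicities then come from the first moment: interchanging sums, $\sum_{a\in\mathbb{F}_{p^m}}S(a)=p^m$ since only $x=0$ survives, and with $S(0)=p^m$ this yields $\sum_{a\neq0}S(a)=0$. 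Combined with $n_1+n_2=p^m-1$, this pins the multiplicities down to $\tfrac{p^d(p^m-1)}{p^d+1}$ and $\tfrac{p^m-1}{p^d+1}$ in cases \textcircled{2}, \textcircled{3} (and $\tfrac{p^m-1}{2}$ each in case \textcircled{1}), and simultaneously forces the two values to carry opposite signs.

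The \emph{main obstacle} is deciding \emph{which} sign occurs, since the first moment alone does not distinguish the two assignments. Here I would compute $\bar\eta(\Delta)$ and track the fourth-root-of-unity factor $(\sqrt{-1})^{(p-1)^2 r/4}$ coming from $G(\bar\eta,\bar\chi)^{\,r}$ in Lemma~\ref{lm:gauss sum}, reducing the univariate model forms via Lemma~\ref{lm:expo sum} and using the standard subfield lifting (Davenport--Hasse) relation between the Gauss sums over $\mathbb{F}_{p^m}$ and over $\mathbb{F}_{p^d}$. This descent to $\mathbb{F}_{p^d}$, available because $d\mid m$, is what produces the factor $\sqrt{(-1)^{(p^d-1)/2}}$ in case \textcircled{1} and the specific $\mp/\pm$ patterns in cases \textcircled{2}, \textcircled{3}. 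The delicate point is that this phase bookkeeping bifurcates exactly according to the parity data $\upsilon_2(m)$ versus $\upsilon_2(l)$ already isolated in the rank step, so the three-way case split is forced by the same arithmetic throughout; carrying the signs consistently through the diagonalization and the Gauss-sum lift is the only genuinely technical part of the argument.
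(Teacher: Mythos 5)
First, a point of calibration: the paper itself does not prove this lemma at all --- it is imported wholesale from the cited references (Coulter 1998, Draper--Hou 2007, Yu--Liu 2014), with the proof explicitly deferred to them. So your attempt has to stand on its own, and its overall skeleton is the right one: several steps are correct and essentially complete, namely the computation $B(x,z)=\mathrm{Tr}(L(x)z)$ with $L(x)=ax^{p^l}+a^{p^{m-l}}x^{p^{m-l}}$, the identification of the radical with the kernel of $a^{p^l}x^{p^{2l}}+ax$, the $2$-adic evaluation of $\gcd(2l,m)$ as $d$ or $2d$, and the magnitude statement $|S(a)|=p^{m-r/2}$ via diagonalization.

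There are, however, genuine gaps. (i) Your case-\textcircled{1} rank argument is wrong as stated: $(-1)^{m/\gcd(2l,m)}=-1$ holds in case \textcircled{2} as well (there $m/(2d)$ is odd), so ``the sign is $-1$ precisely in case \textcircled{1}'' is false, and your criterion would wrongly force unsolvability in case \textcircled{2} too. What actually kills solvability in case \textcircled{1} is that the $a$-dependent factor $a^{(1-p^l)(p^m-1)/(p^d-1)}$ is identically $1$ because $(p^d-1)\mid(p^l-1)$, leaving the impossible condition $(-1)^{s}=1$ with $s$ odd; in cases \textcircled{2}, \textcircled{3} the exponent $(1-p^l)(p^m-1)/(p^{2d}-1)$ is not a multiple of $p^m-1$, so the condition genuinely depends on $a$. (ii) More seriously, the multiplicity step is circular. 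The first moment $\sum_{a\neq 0}S(a)=0$ together with $n_1+n_2=p^m-1$ pins down nothing unless you already know that $S(a)$ takes a \emph{single} value on each rank class, i.e.\ that $\bar\eta(\Delta_a)$ (times the fixed Gauss-sum phase) is constant over all $a$ of a given rank; without that there are four unknowns $n_m^{\pm},n_{m-2d}^{\pm}$ and only two equations. That constancy is exactly the ``sign determination'' you postpone to your last paragraph, so the multiplicities cannot be extracted before it. (The class sizes could instead be obtained honestly from your own norm condition: writing $a=\alpha^j$, the rank-$(m-2d)$ condition becomes $j(p^l-1)\equiv\frac{p^m-1}{2}\pmod{p^{2d}-1}$, which has exactly $(p^d-1)\cdot\frac{p^m-1}{p^{2d}-1}=\frac{p^m-1}{p^d+1}$ solutions; alternatively use the second moment $\sum_a|S(a)|^2$.) Finally, the sign determination itself --- which of $-p^{m/2}$ or $+p^{m/2}$ goes with rank $m$ in case \textcircled{2} versus case \textcircled{3}, and the factor $\sqrt{(-1)^{(p^d-1)/2}}$ in case \textcircled{1} --- is only gestured at via $\bar\eta(\Delta)$ and Davenport--Hasse; that computation is the actual substance of the cited papers and is nowhere carried out here. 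Only in case \textcircled{1} is your argument close to complete, since there the rank is always $m$, so $S(a)=\pm c$ for a single constant $c$ and the first moment does force equal frequencies $\frac{p^m-1}{2}$.
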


The following lemma gives the value distribution of the exponential sum
\begin{equation*}
T(a,b)=\sum_{x\in \mathbb{F}_{p^m}}\zeta_p^{\mathrm{Tr}^m_1(ax^{p^l+1}+bx^2)}.
\end{equation*}

\begin{lemma}(See Lemma 2 and Theorem 1 of \cite{luo2008weight})\label{lem:exposumSoddeven}
When $(a,b)$ runs through $\mathbb{F}^2_{p^m}\backslash\{(0,0)\}$, the quadratic form
$\mathrm{Tr}^m_1(ax^{p^l+1}+bx^2)$ has possible rank $m$, $m-d$ or $m-2d$, and

(i) For $s$ being odd, the exponential sum $T(a,b)$ has the following value distribution:
\begin{equation*}
\left\{\begin{array}{lll} \sqrt{(-1)^{\frac{p^d-1}{2}}}p^{\frac m2},&&|R_1|~~times,\\
-\sqrt{(-1)^{\frac{p^d-1}{2}}}p^{\frac m2},&&|R_1|~~times,\\
p^{\frac{m+d}2},&&|R_2|~~times,\\
-p^{\frac{m+d}2},&&|R_3|~~times,\\
\sqrt{(-1)^{\frac{p^d-1}{2}}}p^{\frac {m+2d}2},&&|R_4|~~times,\\
-\sqrt{(-1)^{\frac{p^d-1}{2}}}p^{\frac {m+2d}2},&&|R_4|~~times,\\
\end{array}
\right.
\end{equation*} where $|R_i|$ is given by
\begin{eqnarray}\label{def:Ri}
\left\{\begin{array}{lll}
|R_1|&=&\frac{(p^{m+2d}-p^{m+d}-p^m+p^{2d})(p^m-1)}{2(p^{2d}-1)},\\
|R_2|&=&\frac12(p^{m-d}+p^{\frac{m-d}2})(p^m-1),\\
|R_3|&=&\frac12(p^{m-d}-p^{\frac{m-d}2})(p^m-1),\\
|R_4|&=&\frac{(p^{m-d}-1)(p^m-1)}{2(p^{2d}-1)}.
\end{array}
\right.
\end{eqnarray}

(ii)For $s$ being even, the exponential sum $T(a,b)$ has the following value distribution:
\begin{equation*}
\left\{\begin{array}{lll}
 p^{\frac m2}, && |K_1|~~times, \\
 -p^{\frac m2},&& |K_2| ~~times, \\
 \sqrt{(-1)^{\frac{p^d-1}{2}}}p^{\frac {m+d}2}, & & |K_3| ~~times,\\
  -\sqrt{(-1)^{\frac{p^d-1}{2}}}p^{\frac {m+d}2}, & &  |K_3|~~times,\\
  p^{\frac m2+d}, && |K_4|~~times, \\
   -p^{\frac m2+d} && |K_5| ~~times.\\
\end{array}
\right.
\end{equation*}
where $|K_i|$ is given by
\begin{eqnarray}\label{def:Ki}
\left\{\begin{array}{lll}
|K_1|&=&\frac{(p^{m+2d}-p^{m+d}-p^{m}+p^{\frac m2+2d}-p^{\frac m2+d}+p^{2d})(p^{m}-1)}{2(p^{2d}-1)} ,\\
|K_2|&=&\frac{(p^{m+2d}-p^{m+d}-p^{m}-p^{\frac m2+2d}+p^{\frac m2+d}+p^{2d})(p^{m}-1)}{2(p^{2d}-1)},\\
|K_3|&=&\frac12p^{m-d}(p^{m}-1),\\
|K_4|&=&\frac{(p^{\frac m2}-1)(p^{\frac m2-d}+1)(p^m-1)}{2(p^{2d}-1)},\\
|K_5|&=&\frac{(p^{\frac m2}+1)(p^{\frac m2-d}-1)(p^m-1)}{2(p^{2d}-1) }.
\end{array}
\right.
\end{eqnarray}
\end{lemma}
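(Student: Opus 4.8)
The plan is to read $T(a,b)$ as the Weil sum attached to the quadratic form $Q_{a,b}(x)=\mathrm{Tr}^m_1(ax^{p^l+1}+bx^2)$ on $\mathbb{F}_{p^m}$ viewed as an $m$-dimensional $\mathbb{F}_p$-space, and to split the work into two independent pieces: (a) expressing $T(a,b)$ through the rank and the determinant-type of $Q_{a,b}$, and (b) counting how many pairs $(a,b)$ realize each class. For (a) I would diagonalize $Q_{a,b}$ over $\mathbb{F}_p$ exactly as in Section \ref{sec:Preli}. Applying the one-variable case of Lemma \ref{lm:expo sum} over $\mathbb{F}_p$ to each diagonal term $\sum_{y}\zeta_p^{a_iy^2}=\bar\eta(a_i)G(\bar\eta,\bar\chi)$ and collecting the kernel contribution $p^{m-r}$ yields $T(a,b)=p^{\,m-r}\bar\eta(\Delta_{a,b})\,G(\bar\eta,\bar\chi)^r$, where $r=\mathrm{rank}(Q_{a,b})$ and $\Delta_{a,b}$ is the determinant. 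Using $G(\bar\eta,\bar\chi)^2=\bar\eta(-1)p$ together with Lemma \ref{lm:gauss sum}, this gives $|T(a,b)|=p^{\,m-r/2}$, a real sign $\pm p^{\,m-r/2}$ when $r$ is even and a genuine fourth-root-of-unity phase when $r$ is odd; writing that phase uniformly over the subfield $\mathbb{F}_{p^d}$ produces the factor $\sqrt{(-1)^{(p^d-1)/2}}$ of the statement. In particular the three magnitudes $p^{m/2}$, $p^{(m+d)/2}$, $p^{(m+2d)/2}$ correspond exactly to ranks $m$, $m-d$ and $m-2d$.

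Next I would pin down the admissible ranks. Polarizing $Q_{a,b}$ gives the symmetric bilinear form $B(x,z)=\mathrm{Tr}^m_1\big(a(x^{p^l}z+xz^{p^l})+2bxz\big)$, and transposing the trace turns the radical condition ``$B(x,z)=0$ for all $z$'' into the single linearized equation
\begin{equation*}
L(x)=a^{p^l}x^{p^{2l}}+2b^{p^l}x^{p^l}+ax=0 .
\end{equation*}
Since $d=\gcd(m,l)$ divides $l$, the map $L$ commutes with multiplication by $\mathbb{F}_{p^d}$, so its kernel $W$ is an $\mathbb{F}_{p^d}$-subspace and $\dim_{\mathbb{F}_p}W$ is a multiple of $d$; the $p$-degree $2l$ of $L$ then forces $\dim_{\mathbb{F}_p}W\in\{0,d,2d\}$, i.e. $r\in\{m,m-d,m-2d\}$. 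The parity of $s=m/d$ enters precisely here: it controls whether the rank-$m$ and rank-$(m-2d)$ contributions come out with a genuine $\sqrt{(-1)^{(p^d-1)/2}}$-type phase or as integral powers of $p$, which is exactly the dichotomy between cases (i) and (ii) and explains why the evaluation is naturally organized over $\mathbb{F}_{p^d}$.

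Finally I would fix the frequencies by combining the rank distribution with power-moment identities. Orthogonality of additive characters gives, after summing over $(a,b)$, the clean moments $\sum_{a,b}T(a,b)=p^{2m}$ and $\sum_{a,b}T(a,b)^2=p^{2m}N$, where $N=\#\{(x,y)\in\mathbb{F}_{p^m}^2:\ x^{p^l+1}+y^{p^l+1}=0,\ x^2+y^2=0\}$; subtracting the contribution $T(0,0)=p^m$ and, if needed, computing the third moment supplies enough linear relations among the unknown multiplicities. Together with the count of pairs $(a,b)$ of each rank, and the fact that the two sign-variants of every square-root-type value occur equally often (forcing the symmetric pairings recorded as $|R_1|$, $|R_4|$ and $|K_3|$), solving the resulting linear system produces the closed forms in \eqref{def:Ri} and \eqref{def:Ki}.

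The main obstacle is step (b) in its sharp form: counting, for each admissible rank, how many pairs $(a,b)$ produce that rank and, within a fixed rank, how the quadratic character $\bar\eta(\Delta_{a,b})$ splits between its two values. This is a delicate analysis of the number of solutions of $L(x)=0$ as $(a,b)$ varies, equivalently of when $a^{p^l}x^{p^{2l}}+2b^{p^l}x^{p^l}+ax$ degenerates, and it is exactly where the comparison of $\upsilon_2(m)$ and $\upsilon_2(l)$ (the parity of $s$) must be invoked to separate the two value distributions and to yield the exact constants appearing in $|R_i|$ and $|K_i|$.
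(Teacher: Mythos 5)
First, note that the paper itself contains no proof of Lemma~\ref{lem:exposumSoddeven}: it is imported verbatim from Lemma~2 and Theorem~1 of \cite{luo2008weight}, so your sketch must be judged on its own merits rather than against an internal argument. Its first half is the standard and essentially correct framework (it is also the one used in \cite{luo2008weight}): evaluate $T(a,b)$ through the rank and determinant class of the quadratic form, and locate the radical via the linearized equation $L(x)=a^{p^l}x^{p^{2l}}+2b^{p^l}x^{p^l}+ax=0$. However, your rank-restriction step does not follow from what you wrote: $\mathbb{F}_{p^d}$-linearity of $W=\ker L\cap\mathbb{F}_{p^m}$ together with $\deg L=p^{2l}$ only gives $\dim_{\mathbb{F}_p}W\in d\mathbb{Z}$ and $\dim_{\mathbb{F}_p}W\leqslant 2l$, which is weaker than $\dim_{\mathbb{F}_p}W\in\{0,d,2d\}$ as soon as $l>d$. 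The repair is to use that $L$ is a $p^l$-polynomial with derivative $a$, so for $a\neq0$ its full kernel in $\overline{\mathbb{F}}_p$ is an $\mathbb{F}_{p^l}$-vector space of dimension exactly $2$, and that elements of $\mathbb{F}_{p^m}$ which are $\mathbb{F}_{p^d}$-independent remain $\mathbb{F}_{p^l}$-independent (linear disjointness of $\mathbb{F}_{p^m}$ and $\mathbb{F}_{p^l}$ over their intersection $\mathbb{F}_{p^d}$); hence $\dim_{\mathbb{F}_{p^d}}W\leqslant 2$.

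The serious gap is the determination of the frequencies, which is the actual content of the lemma and which your plan does not carry out. Three concrete problems: (1) the moment system you describe is underdetermined precisely in the delicate cases — when $p^d\equiv1\pmod 4$ all values of $T(a,b)$ are real, so $\sum_{a,b}T(a,b)^2$ coincides with $\sum_{a,b}|T(a,b)|^2$ and you are left with three independent relations (count, first moment, second moment) for four unknowns in case (i), and even in case (ii) with imaginary values the first moment is a single equation in the two independent differences $|K_1|-|K_2|$ and $|K_4|-|K_5|$; (2) the ``fact'' that the two signs of each square-root-type value occur equally often follows from the scaling relation $T(ya,yb)=\bar{\eta}(y)^{r}T(a,b)$ only when $r$ is odd, whereas for $s$ odd and $d$ even all ranks are even yet the lemma still asserts the $|R_1|,|R_1|$ and $|R_4|,|R_4|$ pairings, so this must be proved, not assumed; (3) invoking ``the count of pairs $(a,b)$ of each rank'' as an input is circular, since given the possible values that count is equivalent to the frequencies being sought. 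Closing these holes requires either computing a higher moment such as $\sum_{a,b}T(a,b)^3$ — i.e.\ counting points on $x^{p^l+1}+y^{p^l+1}+z^{p^l+1}=x^2+y^2+z^2=0$, itself a nontrivial problem — or directly counting for how many $(a,b)$ the equation $L(x)=0$ has $1$, $p^d$ or $p^{2d}$ solutions in $\mathbb{F}_{p^m}$; this is where the bulk of the work in \cite{luo2008weight} lies, and it is exactly the step your proposal defers.
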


\section{Main results}\label{sec:main}
This section investigates the complete weight enumerators of cyclic codes by utilizing the value distributions of
the corresponding exponential sums. A general strategy is given and then used to special codes $C_1$, $C_1$ and $C_2$, respectively, as
depicted in Section \ref{sec:intro}.
\subsection{The General Strategy }

We set up our strategy for the general situation which will be used throughout this paper.

Let
\begin{equation*}
    f_{a_{00},\cdots,a_{kk}}(x)=\sum_{i,j=0}^k a_{ij}x^{p^{i}+p^{j}}
\end{equation*}be a polynomial over $\mathbb{F}_{p^m}$, where $k\leqslant m-1$.
It can be verified that $\mathrm{Tr}^m_1(f_{a_{00},\cdots,a_{kk}}(x))$ is a quadratic form over $\mathbb{F}_{p^m}$. The rank of $\mathrm{Tr}^m_1(f_{a_{00},\cdots,a_{kk}}(x))$ is denoted by $r$.

Consider the exponential sum
\begin{equation*}
    S(a_{00},\cdots,a_{kk})=\sum_{x\in\mathbb{F}_{p^m}}\zeta_p^{\mathrm{Tr}^m_1(f_{a_{00},\cdots,a_{kk}}(x))}.
\end{equation*}

We suppose that the sum $S(a_{00},\cdots,a_{kk})$ has been completely determined by the quadratic form $\mathrm{Tr}^m_1(f_{a_{00},\cdots,a_{kk}}(x))$ over
$\mathbb{F}_{p^m}$. In addition, let $f_{r,\beta}$ denote the frequency of $S(a_{00},\cdots,a_{kk})$ taking the value $S_{r,\beta}$ with rank $r$, for $\beta\in J$, where $J$ is an index set.

Now we focus on the complete weight enumerator of the code
\begin{equation}\label{def:genecode}
    C=\{(\mathrm{Tr}^m_1(f_{a_{00},\cdots,a_{kk}}(x)))_{x\in\mathbb{F}_{p^m}^*}:
    a_{00},\cdots,a_{kk}\in\mathbb{F}_{p^m}\}.
\end{equation}

If $a_{00}=\cdots=a_{kk}=0$, the corresponding codeword is the zero codeword, and the contribution to the complete
weight enumerator is
$$w_0^{p^m-1}.$$
Now consider the case that some $a_{ij}$ is nonzero for $1\leqslant i,j\leqslant k$. Let $n_{a_{00},\cdots,a_{kk}}(\rho)$ denote the number of solutions $x\in\mathbb{F}_{p^m}^*$ such that $\mathrm{Tr}^m_1(f_{a_{00},\cdots,a_{kk}}(x))=\rho$, where $\rho\in\mathbb{F}_{p}$, i.e.,
\begin{equation*}
    n_{a_{00},\cdots,a_{kk}}(\rho)=\sharp\{x\in\mathbb{F}_{p^m}^*:\mathrm{Tr}^m_1(f_{a_{00},\cdots,a_{kk}}(x))=\rho\}.
\end{equation*}Then, the contributions of such terms to the complete
weight enumerator are of the form
$$\prod_{\rho=0}^{p-1}w_\rho^{n_{a_{00},\cdots,a_{kk}}(\rho)},$$
and we only need to compute the frequency of each such term and the value of $n_{a_{00},\cdots,a_{kk}}(\rho)$ which will yield the
complete weight enumerator of the code.
%$$\sum_{i_r}f_{i_r}\prod_{\rho\in\mathbb{F}_{p}}w_\rho^{n_{a_{00},\cdots,a_{kk}}(\rho)},$$

Consider the number of solutions $x\in\mathbb{F}_{p^m}$ such that $\mathrm{Tr}^m_1(f_{a_{00},\cdots,a_{kk}}(x))=\rho$, which is given by
\begin{equation*}
    N_{a_{00},\cdots,a_{kk}}(\rho)=\sharp\{x\in\mathbb{F}_{p^m}:\mathrm{Tr}^m_1(f_{a_{00},\cdots,a_{kk}}(x))=\rho\}.
\end{equation*}

It is straightforward that
\begin{equation}\label{genl:Nandn}
    n_{a_{00},\cdots,a_{kk}}(\rho)=\left\{\begin{array}{lll}N_{a_{00},\cdots,a_{kk}}(\rho)-1,&&~~\mathrm{if}~~\rho=0,\\
N_{a_{00},\cdots,a_{kk}}(\rho),&&~~ \mathrm{otherwise}.\end{array}
\right.
\end{equation}

Therefore, it suffices to study the value of $N_{a_{00},\cdots,a_{kk}}(\rho)$, which is determined by
\begin{eqnarray}\label{genl:eq}
    N_{a_{00},\cdots,a_{kk}}(\rho)&=&
        \frac{1}{p}\sum_{x\in\mathbb{F}_{p^m}}\sum_{y\in\mathbb{F}_{p}}\zeta_p^{y(\mathrm{Tr}^m_1(f_{a_{00},\cdots,a_{kk}}(x))-\rho)}\nonumber\\
    &=& p^{m-1}+\frac{1}{p}\sum_{y\in\mathbb{F}_{p}^*}\zeta_p^{y\rho}\sum_{x\in\mathbb{F}_{p^m}}\zeta_p^{y\mathrm{Tr}^m_1(f_{a_{00},\cdots,a_{kk}}(x))}\nonumber\\
    &=& p^{m-1}+\frac{1}{p}\sum_{y\in\mathbb{F}_{p}^*}\zeta_p^{y\rho}S(y a_{00},\cdots,y a_{kk})\nonumber\\
    &=& p^{m-1}+\frac{1}{p}S(a_{00},\cdots, a_{kk})\sum_{y\in\mathbb{F}_{p}^*}\zeta_p^{y\rho}\bar{\eta}(y^r),
\end{eqnarray}
where the last equal sign holds since
\begin{equation*}
    S(y a_{00},\cdots,y a_{kk})=\bar{\eta}(y^r)S(a_{00},\cdots,a_{kk})
\end{equation*} and $\bar{\eta}$ is the quadratic character over $\mathbb{F}_{p}$.

If $\rho=0$, Equation \eqref{genl:eq} shows that
\begin{eqnarray}\label{gene:eq N0}
    N_{a_{00},\cdots,a_{kk}}(0)&=&p^{m-1}+\frac{1}{p}S(a_{00},\cdots,a_{kk})\sum_{y\in\mathbb{F}_{p}^*}\bar{\eta}(y^r)\nonumber\\
    &=&\left\{\begin{array}{lll}p^{m-1}+\frac{p-1}{p}S(a_{00},\cdots,a_{kk}),&&~~\mathrm{if~~}r \mathrm{~~even},\\
p^{m-1},&&~~\mathrm{if~~}r \mathrm{~~odd}.\\\end{array}
\right.
\end{eqnarray} and consequently
\begin{eqnarray}\label{gene:eq n0}
    n_{a_{00},\cdots,a_{kk}}(0)
    =\left\{\begin{array}{lll}p^{m-1}+\frac{p-1}{p}S(a_{00},\cdots,a_{kk})-1,&&~~\mathrm{if~~}r \mathrm{~~even},\\
p^{m-1}-1,&&~~\mathrm{if~~}r \mathrm{~~odd}.\\\end{array}
\right.
\end{eqnarray}

If $\rho\in\mathbb{F}_{p}^*$, it follows from Equations \eqref{genl:Nandn} and \eqref{genl:eq}
that
\begin{eqnarray}\label{genl:eq ne0}
    n_{a_{00},\cdots,a_{kk}}(\rho)&=&N_{a_{00},\cdots,a_{kk}}(\rho)\nonumber\\
    &=&\left\{\begin{array}{lll}p^{m-1}-\frac{1}{p}S(a_{00},\cdots,a_{kk}),&&~~\mathrm{if~~}r \mathrm{~~even},\\
p^{m-1}+\frac{1}{p}\bar{\eta}(\rho)S(a_{00},\cdots,a_{kk})G(\bar{\eta},\bar{\chi}),&&~~\mathrm{if~~}r \mathrm{~~odd}.\\\end{array}
\right.
\end{eqnarray}

By assumption that $f_{r,\beta}$ to be the frequency of $S(a_{00},\cdots,a_{kk})$ taking the value $S_{r,\beta}$ with rank $r$,
each term $\prod_{\rho=0}^{p-1}w_\rho^{n_{a_{00},\cdots,a_{kk}}(\rho)}$ appears $f_{r,\beta}$ times according to
the value of $S(a_{00},\cdots,a_{kk})$ with rank $r$. Clearly, $n_{a_{00},\cdots,a_{kk}}(\rho)$
is related to $S_{r,\beta}$ and thus we denote it by $ n_{a_{00},\cdots,a_{kk}}(\rho;S_{r,\beta})$ to show this.
Therefore, the complete weight enumerator for the code $C$ is
\begin{equation*}
    \mathrm{CWE}(C)=w_0^{p^m-1}+\sum_{r,\beta}f_{r,\beta}\prod_{\rho=0}^{p-1}
    w_\rho^{n_{a_{00},\cdots,a_{kk}}(\rho;S_{r,\beta})}.
\end{equation*}

\subsection{The complete weight enumerator of the code $C_1$ }
Recall that
\begin{equation*}
    C_1=\{\mathsf{c}_2(a)=(\mathrm{Tr}^m_1(ax^{p^l+1}))_{x\in\mathbb{F}_{p^m}^*}:a\in\mathbb{F}_{p^m}\},
\end{equation*}which is a special case of \eqref{def:genecode}.

Now we deal with the complete weight enumerator of the code $ C_1$ by using the exponential sum
\begin{equation*}
    S(a)=\sum_{x\in\mathbb{F}_{p^m}}\zeta_p^{\mathrm{Tr}^m_1(ax^{p^l+1})}.
\end{equation*}

\begin{theorem}\label{thm:code 2} With notation given
before.\\
(i) Assume that $m\neq2l$. Then $C_1$ is a $[p^m-1, m]$ cyclic code over $\mathbb{F}_{p}$ and its complete weight enumerator is shown as follows:\\
\textcircled{1} If
$0=\upsilon_2(m)\leqslant \upsilon_2(l)$, then
\begin{eqnarray}\label{cwe:C2case1}
    \mathrm{CWE}(C_1)&=&w_0^{p^m-1}+\frac{p^m-1}{2}w_0^{p^{m-1}-1}
    \prod_{\rho\in\mathbb{F}_{p}^*}w_{\rho}^{p^{m-1}+\bar{\eta}(\rho)p^{\frac{m-1}{2}}}\nonumber\\
       && +\frac{p^m-1}{2}w_0^{p^{m-1}-1}
       \prod_{\rho\in\mathbb{F}_{p}^*}w_{\rho}^{p^{m-1}-\bar{\eta}(\rho)p^{\frac{m-1}{2}}}.
\end{eqnarray}
\textcircled{2} If
$1\leqslant \upsilon_2(m)\leqslant \upsilon_2(l)$, then
\begin{eqnarray}\label{cwe:C2case2}
    \mathrm{CWE}(C_1)&=&w_0^{p^m-1}+\frac{p^m-1}{2}w_0^{p^{m-1}-1+(p-1)p^{\frac{m-2}{2}}}
    \prod_{\rho\in\mathbb{F}_{p}^*}w_{\rho}^{p^{m-1}-p^{\frac{m-2}{2}}}\nonumber\\
       && +\frac{p^m-1}{2}w_0^{p^{m-1}-1-(p-1)p^{\frac{m-2}{2}}}
       \prod_{\rho\in\mathbb{F}_{p}^*}w_{\rho}^{p^{m-1}+p^{\frac{m-2}{2}}}.
\end{eqnarray}
\textcircled{3} If
$ \upsilon_2(m)= \upsilon_2(l)+1$, then
\begin{eqnarray}\label{cwe:C2case3}
    \mathrm{CWE}(C_1)&=&w_0^{p^m-1}+\frac{p^d(p^m-1)}{p^d+1}w_0^{p^{m-1}-1-(p-1)p^{\frac{m-2}{2}}}
    \prod_{\rho\in\mathbb{F}_{p}^*}w_{\rho}^{p^{m-1}+p^{\frac{m-2}{2}}}\nonumber\\
       && +\frac{p^m-1}{p^d+1}w_0^{p^{m-1}-1+(p-1)p^{\frac{m+2d-2}{2}}}
       \prod_{\rho\in\mathbb{F}_{p}^*}w_{\rho}^{p^{m-1}-p^{\frac{m+2d-2}{2}}}.
\end{eqnarray}
\textcircled{4} If
$ \upsilon_2(m)> \upsilon_2(l)+1$, then
\begin{eqnarray}\label{cwe:C2case4}
    \mathrm{CWE}(C_1)&=&w_0^{p^m-1}\!+\!\frac{p^d(p^m-1)}{p^d+1}w_0^{p^{m-1}-1+(p-1)p^{\frac{m-2}{2}}}
    \prod_{\rho\in\mathbb{F}_{p}^*}w_{\rho}^{p^{m-1}-p^{\frac{m-2}{2}}}\nonumber\\
       && \!+\frac{p^m-1}{p^d+1}w_0^{p^{m-1}-1-(p-1)p^{\frac{m+2d-2}{2}}}
       \prod_{\rho\in\mathbb{F}_{p}^*}w_{\rho}^{p^{m-1}+p^{\frac{m+2d-2}{2}}}.
\end{eqnarray}
(ii) Assume that $m=2l$. Then $C_1$ is a $[p^m-1, m/2]$ cyclic code over $\mathbb{F}_{p}$ and its complete weight enumerator is given by
\begin{eqnarray}\label{cwe:C2case32}
    \mathrm{CWE}(C_1)=w_0^{p^m-1}\!+\!(p^{\frac{m}{2}}-1)w_0^{p^{m-1}-1-(p-1)p^{\frac{m-2}{2}}}
    \prod_{\rho\in\mathbb{F}_{p}^*}w_{\rho}^{p^{m-1}+p^{\frac{m-2}{2}}}.
       \end{eqnarray}
\end{theorem}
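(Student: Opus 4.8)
The plan is to recognize $C_1$ as the special case of the general code \eqref{def:genecode} obtained by taking $f_a(x)=ax^{p^l+1}$, so that the exponential sum governing everything is exactly $S(a)=\sum_{x\in\mathbb{F}_{p^m}}\zeta_p^{\mathrm{Tr}^m_1(ax^{p^l+1})}$, whose value distribution is supplied by Lemma \ref{lm: exponentialsums}. Once the value $S(a)$ and the rank $r$ of the quadratic form $\mathrm{Tr}^m_1(ax^{p^l+1})$ are known, the frequencies $n_a(0)$ and $n_a(\rho)$ for $\rho\in\mathbb{F}_p^*$ are read off from the general identities \eqref{gene:eq n0} and \eqref{genl:eq ne0}, and summing over $a$ produces $\mathrm{CWE}(C_1)$. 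The only two genuinely new ingredients are (a) deciding the parity of $r$ in each subcase, since this selects the even or odd branch of those identities, and (b) checking that $a\mapsto\mathsf{c}_1(a)$ is a bijection onto $C_1$, so that the frequencies from Lemma \ref{lm: exponentialsums} are literally the codeword counts.

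For part (i), where $m\neq 2l$, the map $a\mapsto\mathsf{c}_1(a)$ is injective because $\dim C_1=m$ forces the zero codeword to arise only from $a=0$; hence the frequencies in Lemma \ref{lm: exponentialsums} transfer verbatim. I would then split into the four subcases exactly as stated. In subcase \textcircled{1} we have $\upsilon_2(m)=0$, so $m$ and $d=\gcd(m,l)$ are odd, Lemma \ref{lm: exponentialsums}\textcircled{1} gives $r=m$ odd, and I apply the odd branches. The delicate point is the product $\bar\eta(\rho)S(a)G(\bar\eta,\bar\chi)$: writing $S(a)=\pm\sqrt{(-1)^{(p^d-1)/2}}\,p^{m/2}$ and using $G(\bar\eta,\bar\chi)=(\sqrt{-1})^{(p-1)^2/4}p^{1/2}$ from Lemma \ref{lm:gauss sum}, the two factors of $\sqrt{-1}$ collapse to a real sign because $p^d\equiv p\pmod 4$ for odd $d$, so that $\tfrac1p S(a)G(\bar\eta,\bar\chi)=\pm p^{(m-1)/2}$, producing the exponents $p^{m-1}\pm\bar\eta(\rho)p^{(m-1)/2}$ of \eqref{cwe:C2case1}. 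In subcases \textcircled{2}, \textcircled{3}, \textcircled{4} we have $\upsilon_2(m)\geq 1$, so $m$ is even; in \textcircled{2} also $d$ is even, whence $(-1)^{(p^d-1)/2}=1$ and $S(a)=\pm p^{m/2}$ is real, while in \textcircled{3} and \textcircled{4} the values $\mp p^{m/2}$ and $\pm p^{m/2+d}$ of Lemma \ref{lm: exponentialsums} are already real and the two admissible ranks $m$ and $m-2d$ are both even. I therefore use the even branches $n_a(0)=p^{m-1}-1+\tfrac{p-1}{p}S(a)$ and $n_a(\rho)=p^{m-1}-\tfrac1p S(a)$ throughout; pairing each value of $S(a)$ with its frequency yields precisely the exponents in \eqref{cwe:C2case2}, \eqref{cwe:C2case3} and \eqref{cwe:C2case4}.

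For part (ii), where $m=2l$, the key structural observation is that $d=l=m/2$, so the two ranks in Lemma \ref{lm: exponentialsums}\textcircled{2} are $m$ and $m-2d=0$. A direct computation of the associated bilinear form shows $\mathrm{Tr}^m_1(ax^{p^l+1})\equiv 0$ exactly when $a+a^{p^l}=0$, i.e.\ when $a$ lies in the kernel of the relative trace $\mathrm{Tr}^m_l$, a set of size $p^{m/2}$; these are precisely the $a$ giving the degenerate value $S(a)=p^{m/2+d}=p^m$, namely the zero codeword. For every other $a$ the rank equals $m$ (even) and Lemma \ref{lm: exponentialsums}\textcircled{2} forces $S(a)=-p^{m/2}$. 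Since $a\mapsto\mathsf{c}_1(a)$ is now $p^{m/2}$-to-one, the $p^m-p^{m/2}$ values of $a$ with $S(a)=-p^{m/2}$ collapse to $p^{m/2}-1$ distinct nonzero codewords, each contributing through $n_a(0)=p^{m-1}-1-(p-1)p^{(m-2)/2}$ and $n_a(\rho)=p^{m-1}+p^{(m-2)/2}$, which gives the single nonzero term of \eqref{cwe:C2case32} with multiplicity $p^{m/2}-1$.

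The main obstacle will be the sign and phase bookkeeping. In subcase \textcircled{1} one must verify that the a priori complex quantity $S(a)G(\bar\eta,\bar\chi)$ is in fact real and equal to $\pm p^{(m-1)/2}$; this rests on the congruence $p^d\equiv p\pmod 4$ for odd $d$ together with the explicit evaluation of Lemma \ref{lm:gauss sum}, and it is the one step where the cases $p\equiv 1$ and $p\equiv 3\pmod 4$ must be tracked separately. The second, more conceptual, difficulty is the non-injectivity of the parametrization in part (ii): one has to identify the degenerate rank-$0$ contribution of Lemma \ref{lm: exponentialsums}\textcircled{2} with the zero codeword and divide the remaining frequency by the kernel size $p^{m/2}$, rather than transcribing the frequencies directly as in part (i).
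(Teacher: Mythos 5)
Your proposal is correct and takes essentially the same approach as the paper: it feeds the value distribution of $S(a)$ from Lemma \ref{lm: exponentialsums} into the general formulas \eqref{gene:eq n0} and \eqref{genl:eq ne0}, resolves the odd-rank case via the same phase observation (your $p^d\equiv p\pmod 4$ argument is the paper's remark that $\frac{(p-1)^2}{4}+\frac{p^d-1}{2}$ is even), and for $m=2l$ handles the degeneracy by quotienting by the kernel $K$ of size $p^{m/2}$. The only cosmetic difference is in part (ii), where the paper substitutes $d=m/2$ into \eqref{cwe:C2case3} and divides the frequencies by $p^{m/2}$, while you identify the degenerate value $S(a)=p^m$ directly with the zero codeword before dividing --- the same idea.
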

\begin{proof}
(i) Assume that $m\neq2l$. We only give the proof for the case
 $0=\upsilon_2(m)\leqslant \upsilon_2(l)$ since other cases are similar.

Clearly $a=0$ gives the zero codeword and the contribution to the complete
weight enumerator is
$$w_0^{p^m-1}.$$

Consider $a\in\mathbb{F}_{p^m}^*$. Let
\begin{equation*}
    N_{a}(\rho)=\sharp\{x\in\mathbb{F}_{p^m}:\mathrm{Tr}^m_1(ax^{p^l+1})=\rho\}.
\end{equation*}
and \begin{equation*}
    n_{a}(\rho)=\sharp\{x\in\mathbb{F}_{p^m}^*:\mathrm{Tr}^m_1(ax^{p^l+1})=\rho\}.
\end{equation*}

Note that $r=m$ and $m$ is odd. By Equations \eqref{gene:eq N0} and \eqref{genl:eq ne0}, we have
$$ N_{a}(0)=p^{m-1},$$
and for a fixed $\rho\in\mathbb{F}_{p}^*$,
 \begin{eqnarray*}
 N_{a}(\rho)&=& p^{m-1}+\frac{1}{p}\bar{\eta}(\rho)S(a)G(\bar{\eta},\bar{\chi}).
 \end{eqnarray*}

 It then follows from Equation \eqref{eq:S(a)12} and Lemma \ref{eq:Gausspm} that
  \begin{eqnarray*}
 N_{a}(\rho)=\left\{\begin{array}{lll}
 p^{m-1}+\bar{\eta}(\rho)\sqrt{-1}^{\frac{(p-1)^2}{4}+\frac{p^d-1}{2}}p^{\frac{m-1}{2}},&&~~\frac{p^m-1}{2}~~\mathrm{times},\\
 p^{m-1}-\bar{\eta}(\rho)\sqrt{-1}^{\frac{(p-1)^2}{4}+\frac{p^d-1}{2}}p^{\frac{m-1}{2}},&&~~\frac{p^m-1}{2}~~\mathrm{times}.\end{array}
\right.
 \end{eqnarray*}

 Note that $\frac{(p-1)^2}{4}+\frac{p^d-1}{2}$ is an even integer. This implies that
 \begin{eqnarray*}
 N_{a}(\rho)=\left\{\begin{array}{lll}
 p^{m-1}+\bar{\eta}(\rho)p^{\frac{m-1}{2}},&&~~\frac{p^m-1}{2}~~\mathrm{times},\\
 p^{m-1}-\bar{\eta}(\rho)p^{\frac{m-1}{2}},&&~~\frac{p^m-1}{2}~~\mathrm{times}.\end{array}
\right.
 \end{eqnarray*}

 By Equation \eqref{genl:Nandn} and the above analysis, the result given by Equation \eqref{cwe:C2case1} holds for the case
 $0=\upsilon_2(m)\leqslant \upsilon_2(l)$.

(ii) Assume that $ m=2l$.

Let $$K=\{x\in\mathbb{F}_{p^m}\big|~x^{p^l}+x=0\}.$$
 Note that $\mathsf{c}_2(a)=\mathsf{c}_2(a+\tau)$
 for any $\tau\in K$ and $\mathsf{c}_2(a)\in C_1$. Hence, $C_1$ is degenerate with
 dimension $m/2$ over $\mathbb{F}_p$.

Clearly $|K|=p^{\frac{m}{2}}$ and $\upsilon_2(m)=\upsilon_2(l)+1$. Substituting $d={m}/{2}$ to
 Equation \eqref{cwe:C2case3} and dividing each $A_i$ by
 $p^{\frac{m}{2}}$, we get the result given by \eqref{cwe:C2case32}.

This finishes the proof of Theorem \ref{thm:code 2}.
\hfill\space$\qed$\end{proof}

\begin{example}
(i) Let $m=3$, $k=1$, $p=5$. This corresponds to the case $0=\upsilon_2(m)\leqslant \upsilon_2(l)$. Magma works out that the complete weight enumerator for the code $C_1$ is
$$ w_0^{124} + 62w_0^{24}w_1^{30}w_2^{20}w_3^{20}w_4^{30} +
    62w_0^{24}w_1^{20}w_2^{30}w_3^{30}w_4^{20}.$$

(ii) Let $m=6$, $k=2$, $p=3$. This corresponds to the case $1\leqslant\upsilon_2(m)\leqslant \upsilon_2(l)$. Magma shows that the complete weight enumerator for the code $C_1$ is
$$w_0^{728}+364w_0^{260}w_1^{234}w_2^{234}+364w_0^{224}w_1^{252}w_2^{252}.$$

(iii) Let $m=6$, $k=1$, $p=3$. This corresponds to the case $ \upsilon_2(m)= \upsilon_2(l)+1$ and $m\neq2l$. Magma computes that the complete weight enumerator for the code $C_1$ is
$$w_0^{728}+182w_0^{296}w_1^{216}w_2^{216}+546w_0^{224}w_1^{252}w_2^{252}.$$

(iv) Let $m=4$, $k=1$, $p=3$. This corresponds to the case $ \upsilon_2(m)> \upsilon_2(l)+1$. With the help of Magma, we know that the complete weight enumerator for the code $C_1$ is
$$w_0^{80}+60w_0^{32}w_1^{24}w_2^{24}+20w_0^{8}w_1^{36}w_2^{36}.$$

(v) Let $m=2$, $k=1$, $p=3$. This corresponds to the case $m=2l$. Magma works out that the complete weight enumerator for the code $C_1$ is
$$w_0^{8}+2w_1^{4}w_2^{4}.$$

These experimental results coincide with the complete weight enumerators in Theorem \ref{thm:code 2}.
\end{example}

\subsection{The complete weight enumerator for the
code $C_2$}\label{sec:C3}
Recall that
\begin{equation*}
    C_2=\{\mathsf{c}_3(a,b)=(\mathrm{Tr}^m_1(ax^{p^l+1}+bx^2))_{x\in\mathbb{F}_{p^m}^*}:a,b\in\mathbb{F}_{p^m}\}.
\end{equation*}

Now we present the complete weight enumerator of the code $ C_2$ by employing the exponential sum
\begin{equation*}
T(a,b)=\sum_{x\in \mathbb{F}_{p^m}}\zeta_p^{\mathrm{Tr}^m_1(ax^{p^l+1}+bx^2)}.
\end{equation*}

\begin{theorem}\label{thm:code 3} With notation given
before. Let $|R_i|$ and $|K_i|$ be given by
\eqref{def:Ri} and \eqref{def:Ki}, respectively.\\
(i) Assume that $m\neq2l$. Then $C_2$ is a $[p^m-1, 2m]$ cyclic code over $\mathbb{F}_{p}$ and its complete weight enumerator is shown as follows:\\
\textcircled{1} For the case of $s$ and $d$ both being odd, we have
\begin{eqnarray*}\label{1:code 3}
    \mathrm{CWE}(C_2)&=&w_0^{p^m-1}+|R_1|w_0^{p^{m-1}-1}\prod_{\rho\in\mathbb{F}_{p}^*}w_{\rho}^{p^{m-1}+\bar{\eta}(\rho)p^{\frac{m-1}{2}}}\\
                   &&+|R_1|w_0^{p^{m-1}-1}\prod_{\rho\in\mathbb{F}_{p}^*}w_{\rho}^{p^{m-1}-\bar{\eta}(\rho)p^{\frac{m-1}{2}}}\\
                   &&+|R_2|w_0^{p^{m-1}-1+(p-1)p^{\frac{m+d-2}{2}}}\prod_{\rho\in\mathbb{F}_{p}^*}w_{\rho}^{p^{m-1}-p^{\frac{m+d-2}{2}}}\\
                   &&+|R_3|w_0^{p^{m-1}-1-(p-1)p^{\frac{m+d-2}{2}}}\prod_{\rho\in\mathbb{F}_{p}^*}w_{\rho}^{p^{m-1}+p^{\frac{m+d-2}{2}}}\\
                   &&+|R_4|w_0^{p^{m-1}-1}\prod_{\rho\in\mathbb{F}_{p}^*}w_{\rho}^{p^{m-1}+\bar{\eta}(\rho)p^{\frac{m+2d-1}{2}}}\\
                   &&+|R_4|w_0^{p^{m-1}-1}\prod_{\rho\in\mathbb{F}_{p}^*}w_{\rho}^{p^{m-1}-\bar{\eta}(\rho)p^{\frac{m+2d-1}{2}}}\\
\end{eqnarray*}
\textcircled{2} For the case of $s$ being odd and $d$ being even, we have
\begin{eqnarray*}\label{2:code 3}
    \mathrm{CWE}(C_2)&=&w_0^{p^m-1}+|R_1|w_0^{p^{m-1}-1+(p-1)p^{\frac{m-2}{2}}}\prod_{\rho\in\mathbb{F}_{p}^*}w_{\rho}^{p^{m-1}-p^{\frac{m-2}{2}}}\\
                   &&+|R_1|w_0^{p^{m-1}-1-(p-1)p^{\frac{m-2}{2}}}\prod_{\rho\in\mathbb{F}_{p}^*}w_{\rho}^{p^{m-1}+p^{\frac{m-2}{2}}}\\
                   &&+|R_2|w_0^{p^{m-1}-1+(p-1)p^{\frac{m+d-2}{2}}}\prod_{\rho\in\mathbb{F}_{p}^*}w_{\rho}^{p^{m-1}-p^{\frac{m+d-2}{2}}}\\
                   &&+|R_3|w_0^{p^{m-1}-1-(p-1)p^{\frac{m+d-2}{2}}}\prod_{\rho\in\mathbb{F}_{p}^*}w_{\rho}^{p^{m-1}+p^{\frac{m+d-2}{2}}}\\
                   &&+|R_4|w_0^{p^{m-1}-1+(p-1)p^{\frac{m+2d-2}{2}}}\prod_{\rho\in\mathbb{F}_{p}^*}w_{\rho}^{p^{m-1}-p^{\frac{m+2d-2}{2}}}\\
                   &&+|R_4|w_0^{p^{m-1}-1-(p-1)p^{\frac{m+2d-2}{2}}}\prod_{\rho\in\mathbb{F}_{p}^*}w_{\rho}^{p^{m-1}+p^{\frac{m+2d-2}{2}}}\\
\end{eqnarray*}
\textcircled{3} For the case of $s$ being even and $d$ being odd, we have
\begin{eqnarray*}\label{3:code 3}
    \mathrm{CWE}(C_2)&=&w_0^{p^m-1}+|K_1|w_0^{p^{m-1}-1+(p-1)p^{\frac{m-2}{2}}}\prod_{\rho\in\mathbb{F}_{p}^*}w_{\rho}^{p^{m-1}-p^{\frac{m-2}{2}}}\\
                   &&+|K_2|w_0^{p^{m-1}-1-(p-1)p^{\frac{m-2}{2}}}\prod_{\rho\in\mathbb{F}_{p}^*}w_{\rho}^{p^{m-1}+p^{\frac{m-2}{2}}}\\
                   &&+|K_3|w_0^{p^{m-1}-1}\prod_{\rho\in\mathbb{F}_{p}^*}w_{\rho}^{p^{m-1}+\bar{\eta}(\rho)p^{\frac{m+d-1}{2}}}\\
                   &&+|K_3|w_0^{p^{m-1}-1}\prod_{\rho\in\mathbb{F}_{p}^*}w_{\rho}^{p^{m-1}-\bar{\eta}(\rho)p^{\frac{m+d-1}{2}}}\\
                   &&+|K_4|w_0^{p^{m-1}-1+(p-1)p^{\frac{m+2d-2}{2}}}\prod_{\rho\in\mathbb{F}_{p}^*}w_{\rho}^{p^{m-1}-p^{\frac{m+2d-2}{2}}}\\
                   &&+|K_5|w_0^{p^{m-1}-1-(p-1)p^{\frac{m+2d-2}{2}}}\prod_{\rho\in\mathbb{F}_{p}^*}w_{\rho}^{p^{m-1}-p^{\frac{m+2d-2}{2}}}\\
\end{eqnarray*}
\textcircled{4} For the case of $s$ and $d$ both being even, we have
\begin{eqnarray*}\label{4:code 3}
    \mathrm{CWE}(C_2)&=&w_0^{p^m-1}+|K_1|w_0^{p^{m-1}-1+(p-1)p^{\frac{m-2}{2}}}\prod_{\rho\in\mathbb{F}_{p}^*}w_{\rho}^{p^{m-1}-p^{\frac{m-2}{2}}}\\
                   &&+|K_2|w_0^{p^{m-1}-1-(p-1)p^{\frac{m-2}{2}}}\prod_{\rho\in\mathbb{F}_{p}^*}w_{\rho}^{p^{m-1}+p^{\frac{m-2}{2}}}\\
                   &&+|K_3|w_0^{p^{m-1}-1+(p-1)p^{\frac{m+d-2}{2}}}\prod_{\rho\in\mathbb{F}_{p}^*}w_{\rho}^{p^{m-1}-p^{\frac{m+d-2}{2}}}\\
                   &&+|K_3|w_0^{p^{m-1}-1-(p-1)p^{\frac{m+d-2}{2}}}\prod_{\rho\in\mathbb{F}_{p}^*}w_{\rho}^{p^{m-1}+p^{\frac{m+d-2}{2}}}\\
                   &&+|K_4|w_0^{p^{m-1}-1+(p-1)p^{\frac{m+2d-2}{2}}}\prod_{\rho\in\mathbb{F}_{p}^*}w_{\rho}^{p^{m-1}-p^{\frac{m+2d-2}{2}}}\\
                   &&+|K_5|w_0^{p^{m-1}-1-(p-1)p^{\frac{m+2d-2}{2}}}\prod_{\rho\in\mathbb{F}_{p}^*}w_{\rho}^{p^{m-1}+p^{\frac{m+2d-2}{2}}}\\
\end{eqnarray*}
(ii) Assume that $m=2l$. Then $C_2$ is a $[p^m-1, 3m/2]$ cyclic code over $\mathbb{F}_{p}$ and
\textcircled{1} For the case of $d$ being odd, we have
\begin{eqnarray*}\label{51:code 3}
    \mathrm{CWE}(C_2)&=&w_0^{p^m-1}+\frac 12 p^{\frac m2}(p^m-1)w_0^{p^{m-1}+(p-1)p^{\frac{m-2}{2}}-1}\prod_{\rho\in\mathbb{F}_{p}^*}w_{\rho}^{p^{m-1}-p^{\frac{m-2}{2}}}\\
                   &&+\frac 12 p^{\frac m2}(p^{\frac m2}-1)^2w_0^{p^{m-1}-1-(p-1)p^{\frac{m-2}{2}}}\prod_{\rho\in\mathbb{F}_{p}^*}w_{\rho}^{p^{m-1}+p^{\frac{m-2}{2}}}\\
                   &&+\frac 12(p^m-1)w_0^{p^{m-1}-1}\prod_{\rho\in\mathbb{F}_{p}^*}w_{\rho}^{p^{m-1}+\bar{\eta}(\rho)p^{\frac{3m-2}{4}}}\\
                   &&+\frac 12(p^m-1)w_0^{p^{m-1}-1}\prod_{\rho\in\mathbb{F}_{p}^*}w_{\rho}^{p^{m-1}-\bar{\eta}(\rho)p^{\frac{3m-2}{4}}}\\
\end{eqnarray*}
\textcircled{2} For the case of $d$ being even, we have
\begin{eqnarray*}\label{52:code 3}
    \mathrm{CWE}(C_2)&=&w_0^{p^m-1}+\frac 12 p^{\frac m2}(p^m-1)w_0^{p^{m-1}-1+(p-1)p^{\frac{m-2}{2}}}\prod_{\rho\in\mathbb{F}_{p}^*}w_{\rho}^{p^{m-1}-p^{\frac{m-2}{2}}}\\
                   &&+\frac 12 p^{\frac m2}(p^{\frac m2}-1)^2w_0^{p^{m-1}-1-(p-1)p^{\frac{m-2}{2}}}\prod_{\rho\in\mathbb{F}_{p}^*}w_{\rho}^{p^{m-1}+p^{\frac{m-2}{2}}}\\
                   &&+\frac 12(p^m-1)w_0^{p^{m-1}-1+(p-1)p^{\frac{3m-4}{4}}}\prod_{\rho\in\mathbb{F}_{p}^*}w_{\rho}^{p^{m-1}-p^{\frac{3m-4}{4}}}\\
                   &&+\frac 12(p^m-1)w_0^{p^{m-1}-1-(p-1)p^{\frac{3m-4}{4}}}\prod_{\rho\in\mathbb{F}_{p}^*}w_{\rho}^{p^{m-1}+p^{\frac{3m-4}{4}}}\\
\end{eqnarray*}

\end{theorem}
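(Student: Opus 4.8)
The plan is to specialize the general strategy of Section~\ref{sec:main} to the two-variable quadratic form $\mathrm{Tr}^m_1(ax^{p^l+1}+bx^2)$, feeding in the value distribution of $T(a,b)$ supplied by Lemma~\ref{lem:exposumSoddeven}. The pair $(a,b)=(0,0)$ contributes the single term $w_0^{p^m-1}$, exactly as in the proof of Theorem~\ref{thm:code 2}. For part~(i), where $m\neq2l$, the assignment $(a,b)\mapsto\mathsf{c}_3(a,b)$ is injective, so every nonzero pair yields a distinct codeword and the frequencies $|R_i|$ and $|K_i|$ of Lemma~\ref{lem:exposumSoddeven} may be used verbatim. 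For each admissible value $T(a,b)=S_{r,\beta}$ one reads off $n_{a,b}(\rho)$ directly from Equations~\eqref{gene:eq n0} and \eqref{genl:eq ne0}, so the entire theorem reduces to substituting these counts into $\prod_{\rho=0}^{p-1}w_\rho^{n_{a,b}(\rho)}$.

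The only genuine decision at each value is the parity of the rank $r$, since this selects the branch of \eqref{gene:eq n0}--\eqref{genl:eq ne0}. Writing $m=sd$, the three possible ranks are $m$, $m-d$ and $m-2d$; the outer two share the parity of $sd$ while the middle one has the parity of $(s-1)d$. This is precisely what separates the four cases $\textcircled{1}$--$\textcircled{4}$: for $s$ odd one uses the $R_i$-distribution, and the value of $d\bmod2$ fixes whether the ranks $m,m-2d$ are odd or even; for $s$ even one uses the $K_i$-distribution in the same way. For an odd rank I would invoke the second line of \eqref{genl:eq ne0}, combine the value $S_{r,\beta}=\pm\sqrt{(-1)^{(p^d-1)/2}}\,p^{\mu}$ (where $p^{\mu}\in\{p^{m/2},p^{(m+d)/2},p^{(m+2d)/2}\}$ is the modulus dictated by the rank) with $G(\bar\eta,\bar\chi)=(\sqrt{-1})^{(p-1)^2/4}p^{1/2}$ from Lemma~\ref{lm:gauss sum}, and use that $\frac{(p-1)^2}{4}+\frac{p^d-1}{2}$ is even when $d$ is odd (the observation already used in Theorem~\ref{thm:code 2}) to conclude that $\tfrac1pS_{r,\beta}G(\bar\eta,\bar\chi)$ equals a real number $\pm p^{\mu-1/2}$. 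This produces the $\bar\eta(\rho)$-dependent exponents $p^{m-1}\pm\bar\eta(\rho)p^{\mu-1/2}$ appearing in $\textcircled{1}$ and $\textcircled{3}$. For an even rank $d$ is even, so $\sqrt{(-1)^{(p^d-1)/2}}=1$ and every $S_{r,\beta}$ is already real; the first lines of \eqref{gene:eq n0} and \eqref{genl:eq ne0} then give the uniform exponent $p^{m-1}-\tfrac1pS_{r,\beta}$ on each $w_\rho$ together with the correction $p^{m-1}-1+\tfrac{p-1}{p}S_{r,\beta}$ on $w_0$, which accounts for $\textcircled{2}$, $\textcircled{4}$ and the even-rank rows of $\textcircled{1}$, $\textcircled{3}$.

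For part~(ii), where $m=2l$, we have $d=\gcd(m,l)=m/2$ and $s=2$, so the $K_i$-distribution applies but the code is now degenerate. The crucial step is to evaluate the frequencies at $d=m/2$: substitution into \eqref{def:Ki} gives $|K_5|=0$ and $|K_4|=p^{m/2}-1$, and these $K_4$-pairs have rank $m-2d=0$, i.e. $\mathrm{Tr}^m_1(ax^{p^l+1}+bx^2)\equiv0$, so together with $(0,0)$ they constitute the kernel of $(a,b)\mapsto\mathsf{c}_3(a,b)$, of size $p^{m/2}$, and all map to the zero codeword. I would therefore absorb the $K_4$ contribution into the single $w_0^{p^m-1}$, observe that each nonzero codeword is now obtained from exactly $p^{m/2}$ pairs, and divide $|K_1|,|K_2|,|K_3|$ by $p^{m/2}$; this yields the frequencies $\tfrac12p^{m/2}(p^m-1)$, $\tfrac12p^{m/2}(p^{m/2}-1)^2$ and $\tfrac12(p^m-1)$ in the statement. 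The sub-cases $\textcircled{1}$ and $\textcircled{2}$ then correspond to the parity of the surviving middle rank $m-d=m/2=l$, treated exactly as in part~(i).

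The conceptual content sits entirely in Lemma~\ref{lem:exposumSoddeven}; the main obstacle in the proof itself is organizational. One must correctly pair each of the six values in the two distributions with the parity of its rank across all four cases of part~(i), keeping track of the two sources of $\sqrt{-1}$ (from the value $S_{r,\beta}$ and from $G(\bar\eta,\bar\chi)$) that conspire to give a real exponent. In part~(ii) the subtlety is to recognize that the rank-zero pairs collapse into the kernel, so that only $|K_1|,|K_2|,|K_3|$ survive and must be rescaled by $p^{m/2}$. Once these parities are tabulated, every entry of the enumerator follows from the same two-line substitution, and no new estimates are needed.
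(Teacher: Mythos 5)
Your proposal is correct and follows essentially the same route as the paper's proof: specialize the general strategy to $T(a,b)$, insert the value distribution of Lemma~\ref{lem:exposumSoddeven}, branch on the parity of the rank in \eqref{gene:eq n0} and \eqref{genl:eq ne0} (combining $T(a,b)$ with $G(\bar{\eta},\bar{\chi})$ at odd rank), and, for $m=2l$, identify the rank-zero pairs with the kernel of $(a,b)\mapsto\mathsf{c}_3(a,b)$ and rescale the surviving frequencies by $p^{m/2}$. Your explicit evaluation $|K_5|=0$, $|K_4|=p^{m/2}-1$, with the $K_4$ pairs absorbed into the zero codeword, is in fact more careful than the paper's one-line instruction to substitute $d=m/2$ and divide each frequency by $p^{m/2}$ (which, taken literally, would produce a non-integer frequency on the $K_4$ row).

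One justification needs repair: the claim that ``for an even rank $d$ is even'' is false. In case \textcircled{1} the rank $m-d=(s-1)d$ is even while $d$ is odd, and in case \textcircled{3} the ranks $m=sd$ and $m-2d=(s-2)d$ are even while $d$ is odd. What you actually need there --- that the even-rank values of $T(a,b)$ are real --- holds anyway, because in exactly those rows Lemma~\ref{lem:exposumSoddeven} gives values $\pm p^{\frac{m+d}{2}}$, $\pm p^{\frac m2}$, $\pm p^{\frac m2+d}$ carrying no factor $\sqrt{(-1)^{(p^d-1)/2}}$; that factor occurs at even rank only in cases \textcircled{2} and \textcircled{4}, where $d$ is indeed even and the factor equals $1$. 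So the computation is unaffected and only the stated reason is wrong. (Incidentally, carrying out your even-rank substitution shows that the $|K_5|$ row printed in case \textcircled{3} of the theorem has a sign typo: it should read $w_{\rho}^{p^{m-1}+p^{\frac{m+2d-2}{2}}}$, as in case \textcircled{4}.)
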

\begin{proof}
(i) Assume that $m\neq2l$. We only give the proof for the case of $s$ and $d$ both being odd,
since other cases are similar to get.

Clearly $a=b=0$ gives the zero codeword and the contribution to the complete
weight enumerator is
$$w_0^{p^m-1}.$$

Consider $a$ or $b$ is nonzero.
Let
\begin{equation*}
    n_{a,b}(\rho)=\sharp\{x\in\mathbb{F}_{p^m}^*:\mathrm{Tr}(ax^{p^l+1}+bx^2)=\rho\},
\end{equation*}
and
\begin{equation*}
    N_{a,b}(\rho)=\sharp\{x\in\mathbb{F}_{p^m}:\mathrm{Tr}(ax^{p^l+1}+bx^2)=\rho\},
\end{equation*} where $\rho\in\mathbb{F}_{p}$.

Recall that $d=\mathrm{gcd}(m,l)$ and $s=m/d$. Since both $m$ and $d$ are odd, we have $m$ and $m-2d$ are odd and $m-d$ is even.

The value of $N_{a,b}(\rho)$
 will be calculated by distinguishing the following subcases.

\emph{Case 1:} $r=m$.

In this case, Lemma \ref{lem:exposumSoddeven} shows that
\begin{eqnarray*}
% \nonumber to remove numbering (before each equation)
  T(a,b)=\left\{\begin{array}{lll} \sqrt{(-1)^{\frac{p^d-1}{2}}}p^{\frac m2},&&|R_1|~~\mathrm{times},\\
-\sqrt{(-1)^{\frac{p^d-1}{2}}}p^{\frac m2},&&|R_1|~~\mathrm{times}.\\\end{array} \right.
\end{eqnarray*}

It follows from Equation \eqref{genl:eq} that
\begin{eqnarray*}\label{genl:eqC3}
    N_{a,b}(\rho)&= & p^{m-1}+\frac{1}{p}T(a,b)\sum_{y\in\mathbb{F}_{p}^*}\zeta_p^{y\rho}\bar{\eta}(y^r),\nonumber\\
 &= & p^{m-1}+\frac{1}{p}T(a,b)\sum_{y\in\mathbb{F}_{p}^*}\zeta_p^{y\rho}\bar{\eta}(y),\nonumber\\
 &=& \left\{\begin{array}{lll}p^{m-1},&&~~\mathrm{if~~}\rho=0,\\
p^{m-1}+\frac{1}{p}\bar{\eta}(\rho)T(a,b)G(\bar{\eta},\bar{\chi}),&&~~\mathrm{otherwise}.
\\\end{array} \right.
\end{eqnarray*}
Then
\begin{eqnarray*}
% \nonumber to remove numbering (before each equation)
  n_{a,b}(\rho) = \left\{\begin{array}{lll}p^{m-1}-1,&&~~\mathrm{if~~}\rho=0,\\
p^{m-1}+\frac{1}{p}\bar{\eta}(\rho)T(a,b)G(\bar{\eta},\bar{\chi}),&&~~\mathrm{otherwise}.\\\end{array} \right.
\end{eqnarray*}

By Lemma \ref{lm:gauss sum}, the contributions of such terms to the complete weight enumerator
is then
\begin{eqnarray*}
   |R_1|w_0^{p^{m-1}-1}\prod_{\rho\in\mathbb{F}_{p}^*}w_{\rho}^{p^{m-1}+\bar{\eta}(\rho)p^{\frac{m-1}{2}}}
   +|R_1|w_0^{p^{m-1}-1}\prod_{\rho\in\mathbb{F}_{p}^*}w_{\rho}^{p^{m-1}-\bar{\eta}(\rho)p^{\frac{m-1}{2}}}.\\         \end{eqnarray*}

\emph{Case 2:} $r=m-d$.

In this case, we have
\begin{eqnarray*}
% \nonumber to remove numbering (before each equation)
  T(a,b)=\left\{\begin{array}{lll} p^{\frac{m+d}2},&&|R_2|~~times,\\
-p^{\frac{m+d}2},&&|R_3|~~times.\\\end{array} \right.
\end{eqnarray*}

Equation \eqref{genl:eq} yields that
\begin{eqnarray*}\label{genl:eqC3}
    N_{a,b}(\rho)&= & p^{m-1}+\frac{1}{p}T(a,b)\sum_{y\in\mathbb{F}_{p}^*}\zeta_p^{y\rho}\\
 &=& \left\{\begin{array}{lll}p^{m-1}+\frac{p-1}{p}T(a,b),&&~~\mathrm{if~~}\rho=0,\\
p^{m-1}-\frac{1}{p}T(a,b),&&~~\mathrm{otherwise}.
\end{array} \right.
\end{eqnarray*}

Therefore
\begin{eqnarray*}
% \nonumber to remove numbering (before each equation)
  n_{a,b}(\rho) = \left\{\begin{array}{lll}p^{m-1}-1+\frac{p-1}{p}T(a,b),&&~~\mathrm{if~~}\rho=0,\\
p^{m-1}-\frac{1}{p}T(a,b),&&~~\mathrm{otherwise}.
\end{array} \right.
\end{eqnarray*}

The contributions of such terms to the complete weight enumerator
is
\begin{eqnarray*}
&|R&_2|w_0^{p^{m-1}+(p-1)p^{\frac{m+d-2}{2}}-1}\prod_{\rho\in\mathbb{F}_{p}^*}w_{\rho}^{p^{m-1}-p^{\frac{m+d-2}{2}}}\\
                   &&+|R_3|w_0^{p^{m-1}-(p-1)p^{\frac{m+d-2}{2}}-1}\prod_{\rho\in\mathbb{F}_{p}^*}w_{\rho}^{p^{m-1}+p^{\frac{m+d-2}{2}}}\\
\end{eqnarray*}

\emph{Case 3:} $r=m-2d$.

In this case, we have
\begin{eqnarray*}
% \nonumber to remove numbering (before each equation)
  T(a,b)=\left\{\begin{array}{lll} \sqrt{(-1)^{\frac{p^d-1}{2}}}p^{\frac {m+2d}2},&&|R_{4}|~~times,\\
-\sqrt{(-1)^{\frac{p^d-1}{2}}}p^{\frac {m+2d}2},&&|R_{4}|~~times.\\\end{array} \right.
\end{eqnarray*}

By a similar discussion to \emph{Case 1}, we have
\begin{eqnarray*}
% \nonumber to remove numbering (before each equation)
  n_{a,b}(\rho) = \left\{\begin{array}{lll}p^{m-1}-1,&&~~\mathrm{if~~}\rho=0,\\
p^{m-1}+\frac{1}{p}\bar{\eta}(\rho)T(a,b)G(\bar{\eta},\bar{\chi}),&&~~\mathrm{otherwise}.\\\end{array} \right.
\end{eqnarray*}

Therefore, the contributions of such terms to the complete weight enumerator
is then

\begin{eqnarray*}
% \nonumber to remove numbering (before each equation)
  |R_{4}|w_0^{p^{m-1}-1}\prod_{\rho\in\mathbb{F}_{p}^*}w_{\rho}^{p^{m-1}+\bar{\eta}(\rho)p^{\frac{m+2d-1}{2}}}+|R_{4}|w_0^{p^{m-1}-1}\prod_{\rho\in\mathbb{F}_{p}^*}w_{\rho}^{p^{m-1}-\bar{\eta}(\rho)p^{\frac{m+2d-1}{2}}}.\\
\end{eqnarray*}

The desired conclusion then follows immediately from the above arguments.

(ii) Assume that $ m=2l$.

Let $$K=\{x\in\mathbb{F}_{p^m}\big|~x^{p^l}+x=0\}.$$
 Note that $\mathsf{c}_3(a,b)=\mathsf{c}_3(a+\tau,b)$
 for any $\tau\in K$ and $\mathsf{c}_3(a,b)\in C_2$. Hence, $C_2$ is degenerate with
 dimension $3m/2$ over $\mathbb{F}_p$.

Clearly $|K|=p^{\frac{m}{2}}$ and $s=2$. Substituting $d={m}/{2}$ to
 the case $s$ being even and dividing each frequency by
 $p^{\frac{m}{2}}$, we get the desired results.

This finishes the proof of Theorem \ref{thm:code 3}.
\hfill\space$\qed$\end{proof}

\begin{example}
(i) Let $m=3$, $k=1$, $p=3$. Then $s=3$ and $d=1$. Magma works out that $C_2$ is a [26, 6, 12] cyclic code with the complete weight enumerator
\begin{eqnarray*}
% \nonumber to remove numbering (before each equation)
 w_0^{26 }&+& 156w_0^{14}w_1^{6}w_2^{6 }+ 13w_0^{8}w_1^{18 }+ 234w_0^{8}w_1^{12}w_2^{6 }\\
    &+&234w_0^{8}w_1^{6}w_2^{12 }+ 13w_0^{8}w_2^{18 }+ 78w_0^{2}w_1^{12}w_2^{12}.
\end{eqnarray*}

(ii) Let $m=6$, $k=2$, $p=3$. Then $s=3$ and $d=2$. Magma works out that $C_2$ is a [728, 12, 324] cyclic code with the complete weight enumerator
\begin{eqnarray*}
% \nonumber to remove numbering (before each equation)
w_0^{728 }&+ &364w_0^{404}w_1^{162}w_2^{162 }+
            32760w_0^{296}w_1^{216}w_2^{216 }+235872w_0^{260}w_1^{234}w_2^{234 }\\
    &+& 235872w_0^{224}w_1^{252}w_2^{252 }+26208w_0^{188}w_1^{270}w_2^{270 }+ 364w_0^{80}w_1^{324}w_2^{324}.
\end{eqnarray*}

(iii) Let $m=4$, $k=2$, $p=3$. Then $s=2$ and $d=2$. Magma works out that $C_2$ is a [80, 6, 36] cyclic code with the complete weight enumerator
\begin{eqnarray*}
% \nonumber to remove numbering (before each equation)
w_0^{80 }+ 40w_0^{44}w_1^{18}w_2^{18 }+ 360w_0^{32}w_1^{24}w_2^{24 }+ 288w_0^{20}w_1^{30}w_2^{30 }+
    40w_0^{8}w_1^{36}w_2^{36}.
\end{eqnarray*}

These experimental results coincide with the complete weight enumerators in Theorem \ref{thm:code 3}.

\end{example}

\section{Conclusion and remarks}\label{sec:conclusion}

In this paper, we concentrated on the complete weight enumerators of
cyclic codes. A general strategy was proposed by using exponential sums and then the complete weight enumerators of three classes of cyclic codes were explicitly determined. In addition, one can get the weight distributions of the codes through their complete weight enumerators.

It should be noted that the exponential sums are known in a few cases. Hence the
complete weight enumerator of most cyclic codes cannot be explicitly
presented. We mention that the complete weight enumerators are still open for most cyclic codes and it will be a good research problem to construct more cyclic codes and determine their complete weight enumerators and weight distributions as well. We leave this for future work.

\begin{acknowledgements}
The work of Zheng-An Yao is partially supported by the NSFC (Grant No.11271381), the NSFC (Grant No.11431015)
and China 973 Program (Grant No. 2011CB808000).
This work is also partially supported by the NSFC (Grant No. 61472457) and Guangdong Natural Science
Foundation (Grant No. 2014A030313161).
\end{acknowledgements}

%\bibliography{code}
%\bibliographystyle{spmpsci}

\end{document}